\documentclass[twocolumn]{autart}
\usepackage[utf8]{inputenc}
\usepackage{mathtools, cuted}
\usepackage{amssymb}
\usepackage{balance}
\usepackage{mathrsfs}
\usepackage{comment}
\usepackage{enumitem}
\usepackage{mathdots}
\usepackage{xfrac}
\usepackage{bm}
\usepackage{array}
\usepackage{color}
\usepackage{algorithm}
\usepackage{algorithmicx}
\usepackage{algpseudocode}
\usepackage{physics}
\usepackage{lipsum}
\usepackage{float}

\newtheorem{theorem}{Theorem}
\newtheorem{lemma}{Lemma}
\newtheorem{proposition}{Proposition}
\newtheorem{corollary}{Corollary}

\usepackage{bbm}                         
\newcommand{\zon}[2]{\langle #1, #2\rangle}
\usepackage{tikz}

\usepackage{pgfplots}
\usepgflibrary{shapes}
\usepackage{verbatim}
\usepackage{caption}
\usepackage{subcaption}

\usetikzlibrary{arrows.meta,positioning,shapes.geometric,calc}
\usepackage{amsmath,amssymb,bm}

\tikzstyle{Rect}=[draw=blue,line width=0.001pt,preaction={clip, postaction={pattern=north east lines, pattern color=blue,line width=0.1pt}}]
\tikzset{
	>=Stealth,
	help lines/.style={dashed, thick},
	axis/.style={<->},
	important line/.style={thick},
	connection/.style={thick, dotted},
}

\newcommand{\redu}[2]{\downarrow_{#1}\hspace{-0.06cm}#2}

\DeclareMathAlphabet\mathbfcal{OMS}{cmsy}{b}{n}

\newtheorem{definition}{Definition}
\usepackage{etoolbox}

\makeatletter
\def\@xnamedef#1{\expandafter\protected@xdef\csname #1\endcsname}
\def\no@harm{} 
\def\ead@au#1{\protected@edef\@ead@au{#1}}
\patchcmd\runningauthor@fmt{\global\edef}{\protected@xdef}{}{}
\patchcmd\runningauthor@fmt{\global\edef}{\protected@xdef}{}{}
\patchcmd\author@fmt{\edef}{\protected@edef}{}{}
\patchcmd\add@xtok{\xdef}{\protected@xdef}{}{}
\makeatother

\usepackage{eso-pic}
\AddToShipoutPictureBG*{%
	\AtPageUpperLeft{%
		\setlength\unitlength{1in}%
		\hspace*{\dimexpr0.5\paperwidth\relax}
		\makebox(8,-1.3)[c]{
			\begin{tabular}{c c}
				Rodrigo A. Gonz\'alez \emph{et al.},
				The Zonotopic Mixture Filter, \\
				uploaded to arXiv on Aug 18th, 2026. \\
\end{tabular}}}}

\begin{document}
\begin{frontmatter}

\title{The Zonotopic Mixture Filter\thanksref{footnoteinfo}} 

\thanks[footnoteinfo]{This paper was not presented at any IFAC 
meeting. Corresponding author: Rodrigo A. Gonz\'alez.}

\author[First]{Rodrigo A. González},
\ead{r.a.gonzalez@tue.nl}
\author[Second]{Angel L. Cedeño},
\ead{angel.cedeno@usach.cl}
\author[Third]{Vicen\c{c} Puig}
\ead{vicenc.puig@upc.edu}

\address[First]{Department of Mechanical Engineering, Eindhoven University of Technology, Eindhoven, The Netherlands}
\address[Second]{Electrical Engineering Department, University of Santiago of Chile, Santiago, Chile}
\address[Third]{Institut de Rob\`{o}tica i Inform\`{a}tica Industrial (CSIC-UPC), Universitat Polit\`{e}cnica de Catalunya, Barcelona, Spain}

\begin{abstract}
State estimation is commonly posed in either a probabilistic or an unknown-but-bounded framework. The former requires a fully specified noise distribution, typically with unbounded support, while the latter yields guaranteed enclosures that carry no probabilistic weighting. Bridging these noise descriptions, this paper proposes a zonotopic mixture noise model, in which the noise is generated by drawing a zonotope from a finite collection according to fixed probabilities and then realizing an arbitrary element of it. For this noise model, we derive the zonotopic mixture filter, which propagates a bank of zonotopic Kalman filters over mode histories, discards the histories falsified by the data, and weights the surviving ones by their relative probability. The resulting state enclosures yield guaranteed coverage probabilities and remain valid for every noise realization compatible with the bounds, and a greedy mixture reduction scheme preserves these statistical guarantees while keeping the representation tractable. Numerical examples illustrate the proposed approach and its potential benefits over related state estimation methods.
\end{abstract}

\begin{keyword}
Set-membership Estimation, Zonotopic Mixtures, Gaussian Sum Filtering, Zonotopic Kalman Filter
\end{keyword}

\end{frontmatter}

\section{Introduction}
State estimation, the process of inferring the internal variables of a dynamical system from data, is a key tool in filtering, identification, and control \cite{anderson1979}. Ubiquitous in model predictive control \cite{rawlings2017model}, it also plays a central role in prognosis and fault detection \cite{almohamad2021prognosis}, localization and sensor fusion \cite{wang2018zonotopic,zhang2025distributed}, and system identification \cite{gibson2005,schon2011,Cedeno2024Id}.

Two frameworks for state estimation are prevalent in practice, distinguished by how the noise is modeled: the probabilistic setting and the unknown-but-bounded (UBB) setting. In the probabilistic setting, noise is a stochastic process with a known distribution, and state estimation follows from Bayesian filtering recursions \cite{sarkka2013bayesian}. For linear systems with additive Gaussian noise, these recursions admit the closed-form solution given by the Kalman filter \cite{kalman1960}, and generalize to the Gaussian sum filter for Gaussian mixture noise \cite{alspach1972nonlinear,kitagawa1987non} or for static nonlinearities \cite{arasaratnam2007discrete,Cedeno2024quadrature}. For arbitrary densities and nonlinear dynamics, the particle filter approximates the recursions through a weighted sample representation \cite{gordon1993novel}. Because the noise distribution is fully specified, the recursions return a point estimate together with its covariance, which is useful for uncertainty quantification. However, such a specification is often inaccurate or an oversimplification of the true noise characteristics, and the densities in use typically have unbounded support, even though disturbances are physically bounded in practice.

The UBB setting instead assumes only that the noise belongs to a bounded set, with no probabilistic structure on its realization. The set-membership estimators provide a set of state estimates, all of which are consistent with the collected data. Algorithms in this domain differ in the geometry adopted for noise bounds, such as polytopes, ellipsoids, or zonotopes \cite{depaula2022zonotopic}. Zonotopes, which are affine images of unit hypercubes, are closed under the linear maps and Minkowski sums appearing in the state recursion, which makes them a convenient modeling choice. They have been considered in \cite{alamo2005guaranteed}, and in the zonotopic Kalman filter (ZKF) of \cite{combastel2015zonotopes}, which later extended to nonlinear \cite{rego2020guaranteed} and switched \cite{ifqir2022zonotopic} systems. However, set-membership estimates are conservative by construction: every realization within the bounds is treated as admissible, and no mechanism distinguishes likely from unlikely trajectories. This is limiting in safety-critical applications, where probabilistic information is essential to assess reliability and performance~\cite{abate2008probabilistic,mesbah2016stochastic}.

The shortcomings of both approaches have motivated combinations of the two settings. In \cite{combastel2016extended}, the noise is decomposed into the sum of a zonotopically bounded term and a Gaussian term. The same decomposition is used in \cite{zhan2023zonotope}, extended to nonlinear systems in \cite{de2024extended}, and applied to data fusion in \cite{salhi2024zonotopic}.  However, the unbounded Gaussian component means that the state is enclosed only up to a prescribed confidence level rather than with certainty, and the split of the noise into UBB and stochastic terms can be difficult to motivate from a physical standpoint. A different route maintains a purely stochastic description with bounded support, e.g., uniform or truncated-Gaussian noise \cite{pavelkova2018approximate,gonzalez2025truncated}; however, as in any probabilistic approach, a specific distribution must be assumed. Alternatively, \cite{benavoli2016probabilistic} shows that pure set-membership filtering admits an equivalent probabilistic reformulation through sets of probability measures, although the resulting description does not carry probability weighting between different noise regions.

This paper proposes an alternative merge of the probabilistic and UBB paradigms in state estimation by introducing a the zonotopic mixture model for the measurement and output noise. At each time instant, this model generates a noise realization by first selecting one zonotope from a finite collection according to fixed mode probabilities, and then selecting an arbitrary element of such zonotope. This is the set-membership counterpart of a finite mixture model and recovers the UBB zonotopic description and point-mass models \cite{bergman1999recursive,giraldo2024gaussian,matousek2026survey} as special cases. Since no in-mode distribution is assumed, the model is consistent with every noise law supported on the given zonotopes. Unlike box particle filtering \cite{gning2013introduction}, where weighted boxes discretize the posterior of a fully stochastic model, the zonotopes here define the noise model itself. Because they may overlap, the mode probabilities refine a bounded region into subregions of different likelihood, and retaining only the highest-weight components of the mixture yields enclosures that are tighter than the worst-case set and valid with a guaranteed probability.

The main contributions of this paper are as follows:
\begin{enumerate}[label=C\arabic*,leftmargin=*]
    \item \label{contribution2} We derive the zonotopic mixture filter (ZMF, Theorem~\ref{thm:zmf}) for a dynamical system subject to zonotopic mixture noise, and show that it delivers state enclosures with guaranteed coverage probabilities (Corollary~\ref{cor:prob}), valid for every noise realization compatible with the bounds.
    \item \label{contribution3} We analyze the connection between the ZMF, the Gaussian sum filter \cite{alspach1972nonlinear} and an exact discrete Bayes filter. In the small-dispersion regime, the ZMF coincides with the latter below a positive threshold, whereas the Gaussian sum filter is equivalent to the ZMF only in the limit (Theorem~\ref{thm:filters-dbf}).
    \item \label{contribution4} We formalize admissible zonotopic mixture reductions (Definition~\ref{def:zm_reduction}), which preserve the enclosure guarantees, and propose a greedy algorithm based on a closed-form merge cost (Lemma~\ref{lem:cost_closed_form}, Algorithm~\ref{alg:zm_reduction}) that keeps the ZMF computationally tractable.
\end{enumerate}

The remainder of this paper is organized as follows. Section~\ref{sec:setup} introduces the notation, the system model, and the filtering problem. Section~\ref{sec:zkf} recaps the zonotopic Kalman filter, and Section~\ref{sec:zmf} presents the main contribution, the zonotopic mixture filter. The zonotopic mixture reduction is developed in Section~\ref{sec:implementation}, simulation studies are reported in Section~\ref{sec:simulation}, and Section~\ref{sec:conclusions} concludes the paper.

\section{System description and problem formulation}
\label{sec:setup}

\subsection{Notation and preliminaries}
\label{subsec:notation}

All vectors and matrices are written in bold and vectors are column vectors, unless transposed. The quantity $\mathbf{x}_{1:t}$ refers to the sequence of vectors $\{\mathbf{x}_1,\mathbf{x}_2,\dots,\mathbf{x}_t\}$. The expression $\delta(\cdot,\cdot)$ denotes the Kronecker delta function, where $\delta(x,y)=1$ if $x=y$, and $\delta(x,y)=0$ otherwise. Similarly, the indicator function $\mathbbm{1}\{\mathbf{x}\in \mathcal{A}\}=1$ if $\mathbf{x}\in \mathcal{A}$, and is equal to zero otherwise. If $\mathbf{A}$ is a matrix, $\|\mathbf{A}\|_2$ denotes the $L_2$ norm, and $\|\mathbf{A}\|_{2,1}$ denotes the $L_{2,1}$ norm, i.e., the sum of the Euclidean norm of the columns of $\mathbf{A}$. Given a symmetric positive definite matrix $\mathbf{W}\in\mathbb{R}^{n\times n}$, the weighted Euclidean norm of a vector $\mathbf{x}\in\mathbb{R}^{n}$ is defined as $\|\mathbf{x}\|_{\mathbf{W}}:=\sqrt{\mathbf{x}^\top \mathbf{W}\mathbf{x}}$. The probability measure is denoted by $\mathbb{P}\{\cdot\}$, $P(\cdot)$ is used for probability mass functions, and $p(\cdot)$ denotes a probability density function.

A zonotope $\langle \mathbf{c}, \mathbf{E} \rangle$ with center $\mathbf{c}\in \mathbb{R}^n$ and generator matrix $\mathbf{E}\in\mathbb{R}^{n\times p}$ is a convex polytope defined as the linear image of the unit hypercube $[-1,1]^p$ by $\mathbf{E}$, translated by $\mathbf{c}$: $\langle \mathbf{c}, \mathbf{E} \rangle=\{\mathbf{c} +\mathbf{Es},\|\mathbf{s}\|_{\infty}\leq 1\}$. Its weighted Frobenius radius, also denoted $F_\mathbf{W}$-radius, is $\sqrt{\textnormal{tr}\{\mathbf{W}\mathbf{E}\mathbf{E}^\top \}}$. The following linear map and Minkowski sum operations hold true on zonotopes:
\begin{align}
    \mathbf{A}\otimes \langle \mathbf{c}, \mathbf{E} \rangle &= \langle \mathbf{A}\mathbf{c}, \mathbf{A}\mathbf{E} \rangle, \notag \\
    \langle \mathbf{c}_1, \mathbf{E}_1 \rangle \oplus \langle \mathbf{c}_2, \mathbf{E}_2 \rangle &= \langle \mathbf{c}_1+\mathbf{c}_2, [\mathbf{E}_1,\mathbf{E}_2]\rangle. \notag
\end{align}
Given a matrix $\mathbf{X}\in\mathbb{R}^{n\times p}$, the row-sum matrix $\textnormal{rs}(\mathbf{X})$ is a diagonal matrix with diagonal entries given by $\sum_{j=1}^p |X_{ij}|$. Since the Minkowski sum increases the generator count, a reduction operator $\redu{q,\mathbf{W}}$ is needed to keep the representation tractable. Given a target $q\geq n$, sort the columns of $\mathbf{E}$ by decreasing weighted Euclidean norm; let $\mathbf{E}_{>}$ collect the $q-n$ largest and $\mathbf{E}_{<}$ collect the remainder. The reduced generator matrix
\begin{equation}
\label{eq:reducedgenerator}
    \redu{q,\mathbf{W}}{\mathbf{E}} := [\mathbf{E}_{>},\textnormal{rs}(\mathbf{E}_{<})] 
\end{equation}
satisfies the inclusion $\langle\mathbf{c},\mathbf{E}\rangle\subseteq \langle\mathbf{c},\redu{q,\mathbf{W}}{\mathbf{E}}\rangle$ \cite{Combastel2005}.

\subsection{System model}
\label{subsec:system}
Consider the discrete-time linear system in state-space form
\begin{subequations}
\label{eqn:system}
\begin{align}
    \label{eqn:state}
    \mathbf{x}_{t+1} &= \mathbf{A}\mathbf{x}_t + \mathbf{B}\mathbf{u}_t + \mathbf{w}_t, \\
    \label{eqn:output}
    \mathbf{y}_t &= \mathbf{C}\mathbf{x}_t + \mathbf{D}\mathbf{u}_t + \mathbf{v}_t,
\end{align}
\end{subequations}
where $\mathbf{x}_t\in\mathbb{R}^{n_x}$, $\mathbf{u}_t\in\mathbb{R}^{n_u}$, and $\mathbf{y}_t\in\mathbb{R}^{n_y}$ are the state, input, and output vectors, respectively. The system matrices $\mathbf{A}\in\mathbb{R}^{n_x\times n_x}$, $\mathbf{B}\in\mathbb{R}^{n_x\times n_u}$, $\mathbf{C}\in\mathbb{R}^{n_y\times n_x}$, and $\mathbf{D}\in\mathbb{R}^{n_y\times n_u}$ are taken constant in time to avoid cluttering the notation; the extension of our algorithms to time-varying $\mathbf{A}_t,\mathbf{B}_t,\mathbf{C}_t,\mathbf{D}_t$ is direct. The initial condition $\mathbf{x}_1$ is contained in the zonotope $\langle \bar{\mathbf{x}}_1, \mathbf{E}_1 \rangle$ for known $\bar{\mathbf{x}}_1$ and $\mathbf{E}_1$, and the input signal $\{\mathbf{u}_t\}$ is a fixed and deterministic sequence.

The process and measurement noises are described by the following noise model, which, to the best of the authors' knowledge, has not been considered previously in the state estimation literature.
 
\begin{definition}[Zonotopic mixture model]
\label{def:zonotopic_mixture}
Consider the collection of zonotopes $\{\langle \mathbf{c}_i, \mathbf{E}_i \rangle\}_{i=1}^{M}$ in $\mathbb{R}^{n}$ with $\mathbf{E}_i\in\mathbb{R}^{n\times p_i}$, and let $\{\alpha_i\}_{i=1}^{M}$ be positive weights satisfying $\sum_{i=1}^{M}\alpha_i=1$. A random vector $\mathbf{w}\in\mathbb{R}^n$ is said to follow the \emph{zonotopic mixture} $\{(\alpha_i,\langle \mathbf{c}_i, \mathbf{E}_i \rangle)\}_{i=1}^{M}$ if it admits the representation
\begin{equation}
    \mathbf{w} = \mathbf{c}_{m} + \mathbf{E}_{m}\mathbf{s}_m,
    \label{eqn:zm_realization}
\end{equation}
where the mode index $m$ is a random variable that takes values in $\{1,\dots,M\}$ with $\mathbb{P}\{m=i\}=\alpha_i$, and $\mathbf{s}_m$ is a vector that satisfies $\|\mathbf{s}_m\|_\infty\leq 1$, to which no probabilistic description is assigned a priori.
\end{definition}

The mode index is the only source of randomness in the description of the model, since conditioned on $m=i$, the noise is an UBB element $\mathbf{c}_i+\mathbf{E}_i\mathbf{s}_i$ of $\langle \mathbf{c}_i, \mathbf{E}_i \rangle$. The zonotopes need not cover a connected region, and they may have nonempty pairwise intersections. In agreement with the set-membership literature \cite{le2013zonotopes}, the noise may be generated stochastically within each zonotope, but  such distributional information is not assumed or exploited. Consequently, all results in this paper hold for every noise realization compatible with the bounds and, in particular, under any in-mode noise distribution.

\begin{figure}[t]
  \centering
  \includegraphics[width=0.98\columnwidth]{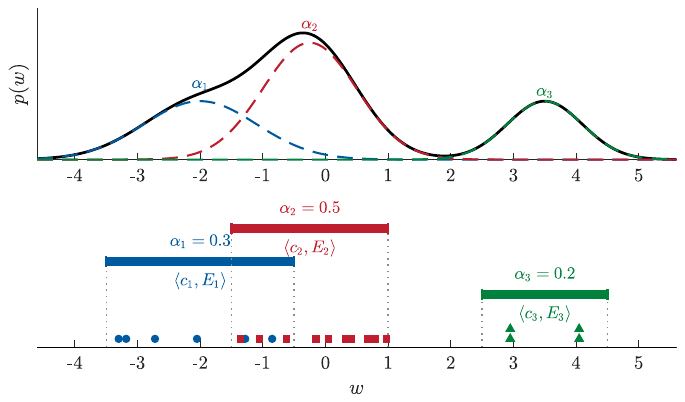}
  \vspace{-0.2cm}
    \caption{Two-stage generation mechanism shared by the Gaussian mixture and the zonotopic mixture. Top: Gaussian mixture with $M=3$ Gaussian components. Bottom: zonotopic mixture with $M=3$ components, where mode $i$ is selected with probability $\alpha_i$, after which $w$ is only known to lie in the zonotope $\langle \mathbf{c}_i, \mathbf{E}_i \rangle$, depicted as a horizontal bar (vertical placement is for visibility only). The blue dots, red squares, and green triangles show sets of noise realizations. Within each zonotope no in-mode distribution is assumed.}
  \label{fig:zmm_illustration}
\end{figure}

\begin{rem}
\label{rem:gmm_relation}
Definition~\ref{def:zonotopic_mixture} is a set-membership counterpart of a finite mixture model, of which the Gaussian mixture is the most widely used \cite{sorenson1971recursive}; Figure~\ref{fig:zmm_illustration} illustrates the analogy. Both descriptions share the same two-stage mechanism: a mode index is drawn with probability $\alpha_i$, after which the vector is generated by the mode-conditional component. In a Gaussian mixture, this component is fully specified by the density $\mathcal{N}(\mathbf{w};\bm{\mu}_i,\bm{\Sigma}_i)$; in the zonotopic mixture, the density is replaced by the set $\langle \mathbf{c}_i, \mathbf{E}_i \rangle$, and the only statement made about the in-mode behavior is membership in this set. A single zonotopic mixture is simultaneously consistent with every noise law whose modes are supported on the respective zonotopes. Two common noise models arise as degenerate cases. For $M=1$, the zonotopic mixture reduces to the UBB zonotopic noise description, while for $\mathbf{E}_i=\mathbf{0}$ for all $i$, the vector $\mathbf{w}$ follows a discrete distribution with atoms $\{\mathbf{c}_i\}_{i=1}^M$ and weights $\{\alpha_i\}_{i=1}^M$.
\end{rem}

For each $t\in\mathbb{N}$, the process noise $\mathbf{w}_t$ in \eqref{eqn:state} and the output noise $\mathbf{v}_t$ in \eqref{eqn:output} follow the zonotopic mixtures $\{(\alpha_i,\langle \bm{\mu}_i, \mathbf{Q}_i \rangle)\}_{i=1}^{W}$ and $\{(\eta_j,\langle \bm{\rho}_j, \mathbf{R}_j \rangle)\}_{j=1}^{V}$, respectively, with $\mathbf{Q}_i\in\mathbb{R}^{n_x\times n_w}$ and $\mathbf{R}_j\in\mathbb{R}^{n_y\times n_v}$ such that $\mathbf{R}_j\mathbf{R}_j^\top\succ \mathbf{0}$. The associated mode indices are denoted $m_t\in \mathcal{I}_W:=\{1,\dots,W\}$ and $n_t\in \mathcal{I}_V:=\{1,\dots,V\}$, so that $\mathbf{w}_t\in\langle \bm{\mu}_{m_t}, \mathbf{Q}_{m_t} \rangle$ and $\mathbf{v}_t\in\langle \bm{\rho}_{n_t}, \mathbf{R}_{n_t} \rangle$. The sequences $\{m_t\}$ and $\{n_t\}$ are i.i.d. in $t$, mutually independent, not known in advance, and do not depend on $\mathbf{x}_1$.
 
\subsection{Problem statement}
\label{subsec:problem}
Given input-output data $\{\mathbf{u}_{1:N},\mathbf{y}_{1:N}\}$ from the system in \eqref{eqn:system}, the known system parameters $(\mathbf{A},\mathbf{B},\mathbf{C},\mathbf{D})$, and the zonotopic mixture parameters $\{\alpha_i,\bm{\mu}_i,\mathbf{Q}_i\}_{i=1}^W$, $\{\eta_j,\bm{\rho}_j,\mathbf{R}_j\}_{j=1}^V$, the following problems are addressed:
 
\begin{enumerate}[label=P\arabic*,leftmargin=*]
\item \label{prob:enclosure} At each time $t$, characterize the states $\mathbf{x}_t$ compatible with the data $\mathbf{y}_{1:t}$ and the zonotopic mixture noise model of Definition~\ref{def:zonotopic_mixture} through a finite collection of zonotopes whose union is guaranteed to contain $\mathbf{x}_t$. This collection should carry probabilistic information about the mode realizations so that state coverage probabilities can be computed, i.e., how likely $\mathbf{x}_t$ is to lie in a given subcollection of zonotopes.
 
\item \label{prob:tractability} Develop a zonotopic mixture reduction scheme that bounds the number of retained zonotopes as $t$ increases, while preserving the state coverage probabilistic guarantees of \ref{prob:enclosure}.
\end{enumerate}

\vspace{-0.2cm}
\section{The zonotopic Kalman filter}
\label{sec:zkf}
\vspace{-0.2cm}
The critical difficulty surrounding the filtering problem for the system in \eqref{eqn:system} is that the sequence of zonotopes that generates $\mathbf{w}_t$ and $\mathbf{v}_t$, i.e., the \textit{mode history}, is not known in advance. To formalize the concept of mode history, we collect the realized mode indices up to each filtering stage into the random vectors
\begin{align}
\label{htk}
\mathbf{h}_{t|t\hspace{-0.02cm}-\hspace{-0.02cm}1}^k \hspace{-0.02cm}&=\hspace{-0.02cm} [n_1^k,m_1^k,n_2^k,m_2^k,\dots,n_{t-1}^k,m_{t-1}^k]\in \mathcal{H}_{t|t-1},   \\
\label{htl}
\mathbf{h}_{t|t}^\ell \hspace{-0.02cm}&=\hspace{-0.02cm} [\mathbf{h}_{t|t-1}^k,n_t^j] \in \mathcal{H}_{t|t},
\end{align}
so that $\mathbf{h}_{t|t-1}^k$ records the output and process modes driving the recursion up to the prediction at time $t$, and $\mathbf{h}_{t|t}^\ell$ appends the output mode $n_t^j$, $j\in\{1,\dots,V\}$, active at the measurement update, with $\ell=(k-1)V+j$. Their numbers of distinct realizations are $|\mathcal{H}_{t|t-1}|=(VW)^{t-1}$ and $|\mathcal{H}_{t|t}|=V(VW)^{t-1}$, indexed by $k=1,2,\dots,|\mathcal{H}_{t|t-1}|$ and $\ell=1,2,\dots,|\mathcal{H}_{t|t}|$, respectively. The initial history $\mathbf{h}_{1|0}$ is defined as the empty tuple.

If the realization of the mode history $\mathbf{h}_{t|t}^\ell$ were known in advance for all $t$, then the zonotopic Kalman filter (ZKF) \cite{combastel2015zonotopes} provides the zonotopes for the prediction and measurement-updated state vector. The ZKF recursion, which serves as the building block of the filter proposed in Section~\ref{sec:zmf}, is presented in Lemma~\ref{lem:zkf}.

\begin{lemma}[Zonotopic Kalman filter, \cite{combastel2015zonotopes}]
\label{lem:zkf}
Consider the system \eqref{eqn:system} with fixed and known mode sequences $\{m_\tau\}_{\tau\geq 1}$
and $\{n_\tau\}_{\tau\geq 1}$ describing the process and output-noise zonotopes at time $t$ given by
$\langle \bm{\mu}_{m_t}, \mathbf{Q}_{m_t} \rangle$ and $\langle \bm{\rho}_{n_t}, \mathbf{R}_{n_t} \rangle$, respectively, and assume that $\mathbf{x}_1\in\langle \bar{\mathbf{x}}_1, \mathbf{E}_1 \rangle$. Given the measurements $\mathbf{y}_{1:t}$ and the zonotopic bound $\mathbf{x}_t \in \langle \hat{\mathbf{x}}_{t|t-1}, \mathbf{E}_{t|t-1} \rangle$, the measurement-updated state satisfies $\mathbf{x}_t \in \langle \hat{\mathbf{x}}_{t|t}, \mathbf{E}_{t|t} \rangle$, where
\begin{align}
    \mathbf{L}_t &= \mathbf{P}_{t|t-1}\mathbf{C}^\top \left( \mathbf{C}\mathbf{P}_{t|t-1}\mathbf{C}^\top + \mathbf{R}_{n_t}\mathbf{R}_{n_t}^\top \right)^{-1}, \label{eqn:zkf_gain}\\
    \hat{\mathbf{x}}_{t|t} &= \hat{\mathbf{x}}_{t|t-1} + \mathbf{L}_t\left( \mathbf{y}_t - \mathbf{C}\hat{\mathbf{x}}_{t|t-1} - \mathbf{D}\mathbf{u}_t - \bm{\rho}_{n_t} \right), \label{eqn:zkf_center}\\
    \mathbf{E}_{t|t} &= \left[ \left( \mathbf{I} - \mathbf{L}_t\mathbf{C} \right)\mathbf{E}_{t|t-1}, \hspace{0.04cm} \mathbf{L}_t\mathbf{R}_{n_t} \right], \label{eqn:zkf_gen}
\end{align}
with $\mathbf{P}_{t|t-1} = \mathbf{E}_{t|t-1}\mathbf{E}_{t|t-1}^\top$. For every $\mathbf{W}\succ\mathbf{0}$, the gain $\mathbf{L}_t$ in \eqref{eqn:zkf_gain} minimizes the $F_\mathbf{W}$-radius of the updated zonotope $\langle \hat{\mathbf{x}}_{t|t}, \mathbf{E}_{t|t} \rangle$ over all gain matrices in $\mathbb{R}^{n_x \times n_y}$. The time-updated state satisfies $\mathbf{x}_{t+1} \in \langle \hat{\mathbf{x}}_{t+1|t}, \mathbf{E}_{t+1|t} \rangle$, where
\begin{align}
    \hat{\mathbf{x}}_{t+1|t} &= \mathbf{A}\hat{\mathbf{x}}_{t|t} + \mathbf{B}\mathbf{u}_t + \bm{\mu}_{m_t}, \label{eqn:zkf_pred_center}\\
    \mathbf{E}_{t+1|t} &= \left[ \mathbf{A}\mathbf{E}_{t|t} , \hspace{0.04cm} \mathbf{Q}_{m_t} \right]. \label{eqn:zkf_pred_gen}
\end{align}
\end{lemma}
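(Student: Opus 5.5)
The plan is to prove the three claims of Lemma~\ref{lem:zkf} separately: the measurement-update inclusion, the optimality of the gain, and the time-update inclusion. Both inclusions rest on writing the state as an explicit element of the proposed zonotope.

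\textbf{Measurement update.} Assume $\mathbf{x}_t\in\langle \hat{\mathbf{x}}_{t|t-1},\mathbf{E}_{t|t-1}\rangle$. Then there is $\mathbf{s}$ with $\|\mathbf{s}\|_\infty\le 1$ such that $\mathbf{x}_t=\hat{\mathbf{x}}_{t|t-1}+\mathbf{E}_{t|t-1}\mathbf{s}$. Since the mode $n_t$ is known, there is also $\mathbf{r}$ with $\|\mathbf{r}\|_\infty\le1$ and $\mathbf{v}_t=\bm{\rho}_{n_t}+\mathbf{R}_{n_t}\mathbf{r}$.

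For an arbitrary gain $\mathbf{L}$, I write the identity
\[
\mathbf{x}_t=\mathbf{x}_t+\mathbf{L}(\mathbf{y}_t-\mathbf{C}\mathbf{x}_t-\mathbf{D}\mathbf{u}_t-\mathbf{v}_t).
\]
The added term vanishes by \eqref{eqn:output}. Substituting both representations and regrouping gives
\[
\mathbf{x}_t=\hat{\mathbf{x}}_{t|t-1}+\mathbf{L}(\mathbf{y}_t-\mathbf{C}\hat{\mathbf{x}}_{t|t-1}-\mathbf{D}\mathbf{u}_t-\bm{\rho}_{n_t})+(\mathbf{I}-\mathbf{L}\mathbf{C})\mathbf{E}_{t|t-1}\mathbf{s}-\mathbf{L}\mathbf{R}_{n_t}\mathbf{r}.
\]
The stacked coefficient vector $[\mathbf{s};-\mathbf{r}]$ has infinity norm at most one. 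Hence $\mathbf{x}_t$ lies in the zonotope with center \eqref{eqn:zkf_center} and generators \eqref{eqn:zkf_gen}, for any $\mathbf{L}$ and in particular for $\mathbf{L}_t$.

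\textbf{Optimality of the gain.} This is the main step. The squared $F_\mathbf{W}$-radius of the updated zonotope is
\[
J(\mathbf{L})=\textnormal{tr}\{\mathbf{W}[(\mathbf{I}-\mathbf{L}\mathbf{C})\mathbf{P}_{t|t-1}(\mathbf{I}-\mathbf{L}\mathbf{C})^\top+\mathbf{L}\mathbf{R}_{n_t}\mathbf{R}_{n_t}^\top\mathbf{L}^\top]\}.
\]
Set $\mathbf{S}=\mathbf{C}\mathbf{P}_{t|t-1}\mathbf{C}^\top+\mathbf{R}_{n_t}\mathbf{R}_{n_t}^\top$, which is positive definite because $\mathbf{R}_{n_t}\mathbf{R}_{n_t}^\top\succ\mathbf{0}$. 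Completing the square gives
\[
J(\mathbf{L})=\textnormal{tr}\{\mathbf{W}(\mathbf{P}_{t|t-1}-\mathbf{L}_t\mathbf{S}\mathbf{L}_t^\top)\}+\textnormal{tr}\{\mathbf{W}(\mathbf{L}-\mathbf{L}_t)\mathbf{S}(\mathbf{L}-\mathbf{L}_t)^\top\},
\]
with $\mathbf{L}_t=\mathbf{P}_{t|t-1}\mathbf{C}^\top\mathbf{S}^{-1}$. The second term equals $\|\mathbf{W}^{1/2}(\mathbf{L}-\mathbf{L}_t)\mathbf{S}^{1/2}\|_F^2\ge0$, and it is zero only at $\mathbf{L}=\mathbf{L}_t$ because $\mathbf{W},\mathbf{S}\succ\mathbf{0}$. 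So $\mathbf{L}_t$ is the unique minimizer for every $\mathbf{W}$. The point to check carefully is the completion of the square: the cross terms $-\textnormal{tr}\{\mathbf{W}\mathbf{L}\mathbf{C}\mathbf{P}\}-\textnormal{tr}\{\mathbf{W}\mathbf{P}\mathbf{C}^\top\mathbf{L}^\top\}$ must match $-\textnormal{tr}\{\mathbf{W}(\mathbf{L}\mathbf{S}\mathbf{L}_t^\top+\mathbf{L}_t\mathbf{S}\mathbf{L}^\top)\}$. This holds because $\mathbf{S}\mathbf{L}_t^\top=\mathbf{C}\mathbf{P}_{t|t-1}$. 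An alternative is to set the gradient $2\mathbf{W}(\mathbf{L}\mathbf{S}-\mathbf{P}_{t|t-1}\mathbf{C}^\top)$ to zero and use strict convexity.

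\textbf{Time update.} Use the updated bound $\mathbf{x}_t=\hat{\mathbf{x}}_{t|t}+\mathbf{E}_{t|t}\mathbf{s}'$ together with $\mathbf{w}_t=\bm{\mu}_{m_t}+\mathbf{Q}_{m_t}\mathbf{q}$, where both coefficient vectors have infinity norm at most one. Substituting into \eqref{eqn:state} gives the center \eqref{eqn:zkf_pred_center} and generators \eqref{eqn:zkf_pred_gen}, with stacked coefficient vector $[\mathbf{s}';\mathbf{q}]$. Equivalently, this follows from the linear-map and Minkowski-sum rules of Section~\ref{subsec:notation}.

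Starting from $\mathbf{x}_1\in\langle\bar{\mathbf{x}}_1,\mathbf{E}_1\rangle$, induction on $t$ then yields the enclosures at every time step.
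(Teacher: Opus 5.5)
Your proof is correct and follows the paper's route: the substitution $\mathbf{y}_t=\mathbf{C}\mathbf{x}_t+\mathbf{D}\mathbf{u}_t+\mathbf{v}_t$ with an arbitrary gain, and the linear-map/Minkowski-sum argument for the time update, are the same steps used in Part~(i) of the proof of Theorem~\ref{thm:zmf}. For gain optimality the paper only cites \cite{combastel2015zonotopes}, and your completion of squares is a valid self-contained substitute; it relies on $\mathbf{S}\succ\mathbf{0}$, which holds because $\mathbf{R}_{n_t}\mathbf{R}_{n_t}^\top\succ\mathbf{0}$, and it also shows that the minimizer is unique and independent of $\mathbf{W}$.
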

\begin{rem}
    The reduction operator $\redu{q,\mathbf{W}}$ is typically needed in the computation of $\mathbf{E}_{t|t}$ and $\mathbf{E}_{t+1|t}$ to keep the representation tractable by avoiding the unbounded increase of columns in the generator matrix updates \eqref{eqn:zkf_gen} and \eqref{eqn:zkf_pred_gen}. This step yields an overapproximation of the zonotopic constraints for $\mathbf{x}_t$ and $\mathbf{x}_{t+1}$.
\end{rem}
While serving as the zonotopic analog of the Kalman filter \cite{anderson1979}, the ZKF does not address the mixed stochastic--deterministic nature of the zonotopic mixture noise introduced in Definition \ref{def:zonotopic_mixture}, as its recursion requires the active mode indices $m_t$ and $n_t$ at every time step, which are unavailable in our setting. A straightforward solution is to discard the probabilistic structure of the noise altogether and retain only its set-membership description, treating the noise as unknown but bounded within the zonotope unions
\begin{equation}
\label{eqn:union_bounds}
    \mathbf{w}_t\in \bigcup_{i=1}^W \langle \bm{\mu}_i, \mathbf{Q}_i \rangle, \quad \mathbf{v}_t\in \bigcup_{j=1}^V \langle \bm{\rho}_j, \mathbf{R}_j \rangle.
\end{equation}
The ZKF becomes directly implementable in this setting once each union is overapproximated by a single zonotope, e.g., by repeated application of the convex hull operation \cite{girard2005reachability}. However, this approach has two shortcomings. First, zonotope unions are in general nonconvex, so a single-zonotope enclosure is conservative, with conservatism increasing with the separation between the mode centers. Second, it discards the mode probabilities $\alpha_i$ and $\eta_j$, so no enclosure with an attached confidence level can be derived. Both shortcomings are addressed in the next section by introducing a state filter tailored to zonotopic mixture noise that pairs set-membership state bounds with mode-history probabilities.

\vspace{-0.2cm}
\section{The zonotopic mixture filter}
\label{sec:zmf}
\vspace{-0.2cm}
The proposed zonotopic mixture filter (ZMF) assigns a ZKF to each mode history, discards the histories falsified by the data, and weights the remaining ones by their posterior probability of realization. Since for a fixed mode history the noise is UBB and carries no probability distribution a priori, the data $\mathbf{y}_{1:t}$ do not admit a likelihood given a mode history. Rather, the only information the data certifies about a history is whether it is consistent with the model, i.e. whether some admissible noise realization reproduces the measurements. The following subsections formalize these notions and state the ZMF.

\subsection{Mode history probabilities and data consistency}
Since the mode sequences are i.i.d. and mutually independent, the prior probability mass function of each history is known in closed form:
\begin{equation}
\pi(\mathbf{h}^k_{t|t\hspace{-0.02cm}-\hspace{-0.02cm}1})
\hspace{-0.05cm}:= \mathbb{P}\{\mathbf{h}_{t|t-1} \hspace{-0.05cm}=\hspace{-0.05cm} \mathbf{h}^k_{t|t-1}\}
\hspace{-0.05cm}=\hspace{-0.05cm} \prod_{\tau=1}^{t-1}\eta_{n_\tau^k}\prod_{s=1}^{t-1}\alpha_{m_s^k},
\label{eq:prior}
\end{equation}
and analogously for $\pi(\mathbf{h}^\ell_{t|t})$. Not every mode history and state trajectory are compatible with the collected data. These notions are made precise next.

\begin{definition}
A history $\mathbf{h}^k_{t|t\hspace{-0.02cm}-\hspace{-0.02cm}1}$ is \emph{consistent} with the data $\mathbf{y}_{1:t-1}$ if there exist $\mathbf{x}_1 \in \langle\bar{\mathbf{x}}_1,\mathbf{E}_1\rangle$, $\mathbf{w}_\tau \in
\langle\boldsymbol{\mu}_{m_\tau^k},\mathbf{Q}_{m_\tau^k}\rangle$ and $\mathbf{v}_\tau \in\langle \boldsymbol{\rho}_{n_\tau^k},\mathbf{R}_{n_\tau^k}\rangle$ for $\tau = 1,\dots,t-1$ that generate $\mathbf{y}_{1:t-1}$ through \eqref{eqn:system}; otherwise, the history is \emph{falsified}. 

Similarly, a state trajectory is consistent with $\mathbf{h}^\ell_{t|t}$ and $\mathbf{y}_{1:t}$ if there exist $\mathbf{x}_1 \in \langle\bar{\mathbf{x}}_1,\mathbf{E}_1\rangle$, $\mathbf{w}_\tau \in\langle\boldsymbol{\mu}_{m_\tau^k},\mathbf{Q}_{m_\tau^k}\rangle$, $\mathbf{v}_\tau \in\langle \boldsymbol{\rho}_{n_\tau^k},\mathbf{R}_{n_\tau^k}\rangle$ for $\tau = 1,\dots,t-1$ and $\mathbf{v}_t \in\langle \boldsymbol{\rho}_{n_t^j},\mathbf{R}_{n_t^j}\rangle$ that generate $\mathbf{y}_{1:t}$ through \eqref{eqn:system}. 
\end{definition}

For each mode history $\mathbf{h}_{t|t}^\ell$, each generating a state containment condition $\mathbf{x}_{t}\in \langle \hat{\mathbf{x}}_{t|t-1}^k,\mathbf{E}_{t|t-1}^k \rangle$ from $\mathbf{y}_{1:t-1}$, consistency is tested recursively through the innovation indicators
\begin{equation}
c^\ell_t := \mathbbm{1}\big\{ \bm{\nu}_t^\ell\in
\zon{\mathbf{0}}{[\mathbf{C}\mathbf{E}^k_{t|t-1},\,\mathbf{R}_j]}\big\},
\label{eq:ctest}
\end{equation}
where $\bm{\nu}_t^\ell = \mathbf{y}_t - \mathbf{C}\hat{\mathbf{x}}^k_{t|t-1}
- \mathbf{D}\mathbf{u}_t - \boldsymbol{\rho}_j$. These innovation indicators accumulate along each mode history as
\begin{equation}
b^\ell_{t|t} = b^k_{t|t-1}\, c^\ell_t, \qquad
b^k_{t+1|t} = b^\ell_{t|t}, \qquad b^1_{1|0} = 1,
\label{eq:brec}
\end{equation}
defining the set of surviving histories $\mathcal{A}_{t|t} := \{\mathbf{h}^\ell_{t|t} \colon b^\ell_{t|t} = 1\}$, and analogously $\mathcal{A}_{t+1|t}$.

\subsection{Zonotopic mixture filter}

Theorem \ref{thm:zmf} introduces the zonotopic mixture filter and its main properties.
\begin{theorem}[Zonotopic mixture filter]
\label{thm:zmf}
Consider the system \eqref{eqn:system} under the zonotopic mixture noise model of Section \ref{subsec:system}, with data $\{\mathbf{u}_{1:N},\mathbf{y}_{1:N}\}$. Initialize $M_{1|0} = 1$, $\gamma^1_{1|0} = 1$, $b^1_{1|0} = 1$, $\hat{\mathbf{x}}^1_{1|0} = \bar{\mathbf{x}}_1$, and $\mathbf{E}^1_{1|0} = \mathbf{E}_1$, and compute recursively, for $t = 1,\dots,N$:

\emph{Measurement update}: For $\ell = (k-1)V + j$ with
$k = 1,\dots,M_{t|t-1}$ and $j \in \mathcal{I}_V$, set
$M_{t|t} = VM_{t|t-1}$, evaluate $b^\ell_{t|t}$ via \eqref{eq:brec}, and
\begin{align}
\gamma^\ell_{t|t} &= \bar{\gamma}^\ell_{t|t}
\Big(\textstyle\sum_{s=1}^{M_{t|t}}\bar{\gamma}^s_{t|t}\Big)^{-1}, \\
\bar{\gamma}^\ell_{t|t} &= \eta_j\,\gamma^k_{t|t-1}\mathbbm{1}\big\{ \bm{\nu}_t^\ell\in
\zon{\mathbf{0}}{[\mathbf{C}\mathbf{E}^k_{t|t-1},\,\mathbf{R}_j]}\big\},
\label{eq:wmeas}\\
\bm{\nu}_t^\ell &=  \mathbf{y}_t - \mathbf{C}\hat{\mathbf{x}}^k_{t|t-1}
- \mathbf{D}\mathbf{u}_t - \boldsymbol{\rho}_j, \\
\label{eq:xmeas}
\hat{\mathbf{x}}^\ell_{t|t} &= \hat{\mathbf{x}}^k_{t|t-1}
+ \mathbf{G}^\ell_t \bm{\nu}_t^\ell,\\
\mathbf{E}^\ell_{t|t} &= \big[(\mathbf{I}
- \mathbf{G}^\ell_t\mathbf{C})\mathbf{E}^k_{t|t-1},\;
\mathbf{G}^\ell_t\mathbf{R}_j\big],
\end{align}
where $\mathbf{G}_t^\ell\in\mathbb{R}^{n_x\times n_y}$ are arbitrary gain matrices.

\emph{Time update}: For $k = (\ell-1)W + i$ with $\ell = 1,\dots,M_{t|t}$
and $i \in \mathcal{I}_W$, set $M_{t+1|t} = WM_{t|t}$,
$b^k_{t+1|t} = b^\ell_{t|t}$, and
\begin{align}
\gamma^k_{t+1|t} &= \alpha_i\,\gamma^\ell_{t|t},\\
\hat{\mathbf{x}}^k_{t+1|t} &= \mathbf{A}\hat{\mathbf{x}}^\ell_{t|t}
+ \mathbf{B}\mathbf{u}_t + \boldsymbol{\mu}_i,\\
\label{generator_timeupdate}
\mathbf{E}^k_{t+1|t} &= \big[\mathbf{A}\mathbf{E}^\ell_{t|t},\;
\mathbf{Q}_i\big].
\end{align}
Then, for every $t = 1,\dots,N$, the following statements hold:
\begin{enumerate}
\item[(i)] For every $\ell=1,\dots,M_{t|t}$, every state trajectory consistent with $\mathbf{h}^\ell_{t|t}$ and $\mathbf{y}_{1:t}$ satisfies $\mathbf{x}_t\in\langle\hat{\mathbf{x}}^\ell_{t|t}, \mathbf{E}^\ell_{t|t}\rangle$ and, whenever such trajectory exists, $\mathbf{h}_{t|t}^\ell\in \mathcal{A}_{t|t}$. Conversely, $b^\ell_{t|t} = 0$ implies that the mode history $\mathbf{h}^\ell_{t|t}$ is falsified by the data.
\item[(ii)] The probability mass function of the mode history $\mathbf{h}_{t|t}$, evaluated at $\mathbf{h}^\ell_{t|t}$, conditioned on the consistency event $\{\mathbf{h}_{t|t} \in \mathcal{A}_{t|t}\}$ is described by the weights
\begin{equation}
\gamma^\ell_{t|t}
= \frac{\pi\big(\mathbf{h}^\ell_{t|t}\big)\,b^\ell_{t|t}}
{\sum_{s=1}^{M_{t|t}}\pi\big(\mathbf{h}^s_{t|t}\big)\,b^s_{t|t}}.
\label{eq:wexact}
\end{equation}
\item[(iii)] For every $\mathbf{W}\succ \mathbf{0}$, the following choice of $\mathbf{G}_t^\ell$ minimizes the $F_\mathbf{W}$-radius of the updated zonotope $\langle\hat{\mathbf{x}}^\ell_{t|t}, \mathbf{E}^\ell_{t|t}\rangle$ over all gain matrices in $\mathbb{R}^{n_x\times n_y}$:
\begin{equation}
\label{optimalgainzmf}
\mathbf{G}^\ell_t\hspace{-0.04cm}=\hspace{-0.04cm}\mathbf{L}^\ell_t \hspace{-0.04cm}:= \hspace{-0.04cm} \mathbf{P}_{t|t-1}^k \mathbf{C}^\top\big(\mathbf{C}\mathbf{P}_{t|t-1}^k\mathbf{C}^\top
+\mathbf{R}_j\mathbf{R}_j^\top\big)^{-1}, 
\end{equation}
where $\mathbf{P}_{t|t-1}^k:=\mathbf{E}^k_{t|t-1}\mathbf{E}^{k\top}_{t|t-1}$.
 
\end{enumerate}
The analogous claims of Statements (i) and (ii) hold for the predicted quantities
$\{(\gamma^k_{t+1|t},
\zon{\hat{\mathbf{x}}^k_{t+1|t}}{\mathbf{E}^k_{t+1|t}})\}_{k=1}^{M_{t+1|t}}$
with respect to $\mathbf{y}_{1:t}$ and $\mathcal{A}_{t+1|t}$.
\end{theorem}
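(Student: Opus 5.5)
The plan is to prove the three statements separately, with (i) and (ii) by a joint induction on $t$ that alternates measurement and time updates, and (iii) by a direct quadratic minimization.

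\textbf{Statement (i).} The induction hypothesis at the prediction stage is the predicted analogue of (i). Every state trajectory consistent with $\mathbf{h}^k_{t|t-1}$ and $\mathbf{y}_{1:t-1}$ satisfies $\mathbf{x}_t\in\zon{\hat{\mathbf{x}}^k_{t|t-1}}{\mathbf{E}^k_{t|t-1}}$, and $b^k_{t|t-1}=1$ whenever such a trajectory exists. The base case $t=1$ is the assumption $\mathbf{x}_1\in\zon{\bar{\mathbf{x}}_1}{\mathbf{E}_1}$ together with $b^1_{1|0}=1$.

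For the measurement update, take a trajectory consistent with $\mathbf{h}^\ell_{t|t}=[\mathbf{h}^k_{t|t-1},n_t^j]$ and $\mathbf{y}_{1:t}$. Its restriction is consistent with the prefix $\mathbf{h}^k_{t|t-1}$ and $\mathbf{y}_{1:t-1}$, so by hypothesis $\mathbf{x}_t-\hat{\mathbf{x}}^k_{t|t-1}=\mathbf{E}^k_{t|t-1}\mathbf{s}$ and $b^k_{t|t-1}=1$. Writing $\mathbf{v}_t-\bm{\rho}_j=\mathbf{R}_j\mathbf{r}$, the output equation gives
\[
\bm{\nu}^\ell_t=\mathbf{C}\mathbf{E}^k_{t|t-1}\mathbf{s}+\mathbf{R}_j\mathbf{r}\in\zon{\mathbf{0}}{[\mathbf{C}\mathbf{E}^k_{t|t-1},\mathbf{R}_j]},
\]
so $c^\ell_t=1$ and therefore $b^\ell_{t|t}=1$. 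Moreover, for any gain,
\[
\mathbf{x}_t-\hat{\mathbf{x}}^\ell_{t|t}=(\mathbf{I}-\mathbf{G}^\ell_t\mathbf{C})\mathbf{E}^k_{t|t-1}\mathbf{s}-\mathbf{G}^\ell_t\mathbf{R}_j\mathbf{r}.
\]
Since $\|[\mathbf{s};-\mathbf{r}]\|_\infty\le 1$, this gives $\mathbf{x}_t\in\zon{\hat{\mathbf{x}}^\ell_{t|t}}{\mathbf{E}^\ell_{t|t}}$.

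For the time update, $\mathbf{x}_{t+1}-\hat{\mathbf{x}}^k_{t+1|t}=\mathbf{A}(\mathbf{x}_t-\hat{\mathbf{x}}^\ell_{t|t})+(\mathbf{w}_t-\bm{\mu}_i)$. This lies in $\zon{\mathbf{0}}{[\mathbf{A}\mathbf{E}^\ell_{t|t},\mathbf{Q}_i]}$, and $b$ is copied forward. A trajectory consistent with $\mathbf{h}^k_{t+1|t}$ and $\mathbf{y}_{1:t}$ has the same data constraints as one consistent with $\mathbf{h}^\ell_{t|t}$, with $\mathbf{w}_t$ additionally constrained, so the hypothesis carries over.

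The converse is the contrapositive. If the history were consistent, a consistent trajectory would exist, and the forward direction just proved would force $b=1$. Hence $b^\ell_{t|t}=0$ implies the history is falsified.

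\textbf{Statement (ii).} I will show by induction that the unnormalized weights satisfy
\[
\gamma^\ell_{t|t}=\kappa_t\,\pi(\mathbf{h}^\ell_{t|t})\,b^\ell_{t|t},\qquad \gamma^k_{t+1|t}=\kappa_t\,\pi(\mathbf{h}^k_{t+1|t})\,b^k_{t+1|t},
\]
for a constant $\kappa_t>0$ independent of the index. This follows from \eqref{eq:prior}, since each update multiplies $\pi$ by $\eta_j$ or $\alpha_i$. It also uses that the indicator in \eqref{eq:wmeas} is exactly $c^\ell_t$, and that $b^\ell_{t|t}=b^k_{t|t-1}c^\ell_t$ with all factors in $\{0,1\}$, so the product of indicators equals $b$. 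Normalizing cancels $\kappa_t$ and yields \eqref{eq:wexact}.

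Identifying \eqref{eq:wexact} with $\mathbb{P}\{\mathbf{h}_{t|t}=\mathbf{h}^\ell_{t|t}\mid \mathbf{h}_{t|t}\in\mathcal{A}_{t|t}\}$ is then the definition of conditioning on a set. The data are treated as fixed, so $\mathcal{A}_{t|t}$ is a deterministic set of histories, and the conditional mass equals $\pi\,\mathbbm{1}\{\cdot\in\mathcal{A}_{t|t}\}/\sum\pi\,\mathbbm{1}\{\cdot\in\mathcal{A}_{t|t}\}$.

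The main obstacle is well-posedness: the denominator must be nonzero. I handle this by noting that the true mode history generated the data and is therefore consistent. By (i) it belongs to $\mathcal{A}_{t|t}$, and it has $\pi>0$ because $\alpha_i,\eta_j>0$. This also guarantees that the normalization in the recursion never divides by zero.

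\textbf{Statement (iii).} Write $\mathbf{P}=\mathbf{P}^k_{t|t-1}$. The squared radius is
\[
J(\mathbf{G})=\mathrm{tr}\{\mathbf{W}[(\mathbf{I}-\mathbf{G}\mathbf{C})\mathbf{P}(\mathbf{I}-\mathbf{G}\mathbf{C})^\top+\mathbf{G}\mathbf{R}_j\mathbf{R}_j^\top\mathbf{G}^\top]\}.
\]
This is a convex quadratic in $\mathbf{G}$ with positive definite curvature, because $\mathbf{S}:=\mathbf{C}\mathbf{P}\mathbf{C}^\top+\mathbf{R}_j\mathbf{R}_j^\top\succ\mathbf{0}$ (using $\mathbf{R}_j\mathbf{R}_j^\top\succ\mathbf{0}$) and $\mathbf{W}\succ\mathbf{0}$. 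The stationarity condition is
\[
2\mathbf{W}(\mathbf{G}\mathbf{S}-\mathbf{P}\mathbf{C}^\top)=\mathbf{0}.
\]
Since $\mathbf{W}$ is invertible, this gives $\mathbf{G}=\mathbf{P}\mathbf{C}^\top\mathbf{S}^{-1}=\mathbf{L}^\ell_t$ independently of $\mathbf{W}$. Alternatively, this is Lemma~\ref{lem:zkf} applied to the fixed history.

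The predicted claims follow from the time-update half of the induction above.
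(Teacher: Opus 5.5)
Your proposal is correct and follows the paper's proof essentially step for step. Part (i) uses the same induction through measurement and time updates, including the innovation-in-zonotope argument, with falsification obtained by contraposition; part (ii) uses the same induction showing the weights are proportional to $\pi\,b$, followed by normalization. The differences are minor: in (iii) you derive the gain directly from the stationarity condition of the convex quadratic (squared $F_{\mathbf{W}}$-radius) instead of citing the ZKF proof, and you justify the nonzero normalizer inside the proof, which the paper leaves to a later remark.
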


\begin{pf}
\emph{Part (i)}. We proceed by induction on the filter steps. At $t = 1$, prior to the measurement update, every state trajectory satisfies $\mathbf{x}_1 \in \zon{\bar{\mathbf{x}}_1}{\mathbf{E}_1} =
\zon{\hat{\mathbf{x}}^1_{1|0}}{\mathbf{E}^1_{1|0}}$ by assumption, and $b^1_{1|0} = 1$.

Fix $\ell = (k-1)V+j$ and suppose that every trajectory consistent with $\mathbf{h}^k_{t|t-1}$ and
$\mathbf{y}_{1:t-1}$ satisfies $\mathbf{x}_t \in \zon{\hat{\mathbf{x}}^k_{t|t-1}}{\mathbf{E}^k_{t|t-1}}$ and $b^k_{t|t-1} = 1$. Consider any trajectory consistent with $\mathbf{h}^\ell_{t|t}$ and $\mathbf{y}_{1:t}$. Then, $\mathbf{x}_t \in \zon{\hat{\mathbf{x}}^k_{t|t-1}}{\mathbf{E}^k_{t|t-1}}$, and together with $\mathbf{v}_t\in \langle \bm{\rho}_j,\mathbf{R}_j \rangle$ this implies
\begin{align}
    \bm{\nu}_t^\ell = \mathbf{C}(\mathbf{x}_t-\hat{\mathbf{x}}_{t|t-1}^k)+\mathbf{v}_t-\bm{\rho}_j \in \langle \mathbf{0},\mathbf{E}_{t|t-1}^k \rangle \oplus \langle \mathbf{0},\mathbf{R}_{j} \rangle, \notag
\end{align}
hence $c^\ell_t = 1$. By \eqref{eq:brec}, $b^\ell_{t|t} = 1$, i.e., $\mathbf{h}^\ell_{t|t} \in \mathcal{A}_{t|t}$. Furthermore, replacing $\mathbf{y}_t=\mathbf{Cx}_t+\mathbf{Du}_t+\mathbf{v}_t$ in \eqref{eq:xmeas} yields
\begin{align}
    \mathbf{x}_t-\hat{\mathbf{x}}^\ell_{t|t} &= (\mathbf{I}-\mathbf{G}_t^\ell \mathbf{C})(\mathbf{x}_t-\hat{\mathbf{x}}^k_{t|t-1})-\mathbf{G}_t^\ell (\mathbf{v}_t-\bm{\rho}_j) \notag \\
    \implies \mathbf{x}_t-\hat{\mathbf{x}}^\ell_{t|t} &\in \langle \mathbf{0}, (\mathbf{I}-\mathbf{G}_t^\ell \mathbf{C})\mathbf{E}^k_{t|t-1} \rangle \oplus \zon{\mathbf{0}}{\mathbf{G}_t^\ell\mathbf{R}_j}, \notag 
\end{align}
and therefore $\mathbf{x}_t \in\zon{\hat{\mathbf{x}}^\ell_{t|t}}{\mathbf{E}^\ell_{t|t}}$.

For the time update, we note that any trajectory consistent with
$\mathbf{h}^k_{t+1|t} = [\mathbf{h}^\ell_{t|t}, i]$ and $\mathbf{y}_{1:t}$
satisfies $\mathbf{x}_{t+1} = \mathbf{A}\mathbf{x}_t +
\mathbf{B}\mathbf{u}_t + \boldsymbol{\mu}_i + \mathbf{Q}_i\mathbf{s}_i$
with $\mathbf{x}_t \in
\zon{\hat{\mathbf{x}}^\ell_{t|t}}{\mathbf{E}^\ell_{t|t}}$ and
$\|\mathbf{s}_i\|_\infty \leq 1$, which gives $\mathbf{x}_{t+1} \in
\zon{\hat{\mathbf{x}}^k_{t+1|t}}{\mathbf{E}^k_{t+1|t}}$. The induction is complete, and the falsification claim follows by contraposition: if $b_{t|t}^\ell=0$, the recursion admits no consistent trajectory, hence no admissible noise realization can generate $\mathbf{y}_{1:t}$, i.e., $\mathbf{h}^\ell_{t|t}$ is falsified.

\emph{Part (ii).} We prove \eqref{eq:wexact} by induction. Define $Z_{t|t} = \sum_{s=1}^{M_{t|t}} \pi(\mathbf{h}^s_{t|t})b^s_{t|t}$, and analogously for the predicted weights. The base case holds since $\gamma^1_{1|0} = \pi(\mathbf{h}^1_{1|0})b^1_{1|0} = 1$, with the prior of the empty tuple being the empty product. For the measurement step, assume $\gamma^k_{t|t-1} = \pi(\mathbf{h}^k_{t|t-1})b^k_{t|t-1}/Z_{t|t-1}$. Since $\pi(\mathbf{h}^\ell_{t|t}) = \eta_j\,\pi(\mathbf{h}^k_{t|t-1})$ by
\eqref{eq:prior} and $b^\ell_{t|t} = b^k_{t|t-1}c^\ell_t$ by~\eqref{eq:brec},
\begin{equation*}
\bar{\gamma}^\ell_{t|t} = \eta_j\,\gamma^k_{t|t-1}\,c^\ell_t = \pi\big(\mathbf{h}^\ell_{t|t}\big)\,b^\ell_{t|t}\,/\,Z_{t|t-1},
\end{equation*}
and the normalization in \eqref{eq:wmeas} replaces $Z_{t|t-1}$ by $Z_{t|t}$. For the time update, $\pi(\mathbf{h}^k_{t+1|t}) =\alpha_i\,\pi(\mathbf{h}^\ell_{t|t})$ and $b^k_{t+1|t} = b^\ell_{t|t}$, so
$\gamma^k_{t+1|t} = \alpha_i\gamma^\ell_{t|t} =
\pi(\mathbf{h}^k_{t+1|t})b^k_{t+1|t}/Z_{t|t}$, and the weights remain
normalized because $\sum_{i\in\mathcal{I}_W}\alpha_i = 1$, i.e.,
$Z_{t+1|t} = Z_{t|t}$. Finally, since
$\mathbb{P}\{\mathbf{h}_{t|t} \in \mathcal{A}_{t|t}\} =
\sum_{s:\, b^s_{t|t}=1}\pi(\mathbf{h}^s_{t|t}) = Z_{t|t}$, the ratio in
\eqref{eq:wexact} is precisely the conditional probability mass function
of $\mathbf{h}_{t|t}$ given $\{\mathbf{h}_{t|t} \in \mathcal{A}_{t|t}\}$.

\emph{Part (iii).} Direct from the proof of the ZKF in Lemma \ref{lem:zkf}, found in \cite{combastel2015zonotopes}, when applied to a fixed mode history. \hfill $\square$
\end{pf}

\begin{rem}
\label{rem:conservatism}
At $t = 1$ the test \eqref{eq:ctest} is exact, since the predicted set $\zon{\bar{\mathbf{x}}_1}{\mathbf{E}_1}$ coincides with the exact set of possible initial states. For $t \geq 2$, the measurement-updated zonotope $\mathbf{E}^\ell_{t|t}$ is an outer bound of the exact set of states consistent with $\mathbf{h}^\ell_{t|t}$ and $\mathbf{y}_{1:t}$, which is in line with the conservatism of the ZKF \cite{combastel2015zonotopes}. So, a failed test \eqref{eq:ctest} conclusively falsifies a history, while a passed test does not certify consistency. The same observation applies when the reduction operator $\downarrow_{q,\mathbf{W}}$ is employed, since $\zon{\mathbf{c}}{\mathbf{E}} \subseteq\zon{\mathbf{c}}{\downarrow_{q,\mathbf{W}}\mathbf{E}}$ preserves Statement (i) of Theorem~\ref{thm:zmf} and the soundness of falsification.
\end{rem}

Under the zonotopic mixture noise model of Definition \ref{def:zonotopic_mixture}, the cumulative data $\mathbf{y}_{1:t}$ certifies only whether a mode history is falsified, and \eqref{eq:wexact} provides the conditional distribution on the mode space, where the probabilistic structure of the model resides. The weights $\gamma_{t|t}^\ell$ and their support are invariant to the distribution of the noise within each zonotope, since no such distribution is assigned by the model. Moreover, every falsified mode history receives zero posterior probability under any noise distribution supported on the zonotopes, and thus can be discarded in the subsequent iterations.

One advantage of the ZMF over the ZKF is the possibility of determining probabilities over mode histories, which translate into state constraint probabilities over time. In particular, a selection of a strict subset $\mathcal{S}$ of high-weight mode histories can lead to a smaller zonotope enclosure for $\mathbf{x}_t$ that holds with a guaranteed probability $\sum_{\ell\in\mathcal{S}} \gamma_{t|t}^\ell$, at the expense of discarding the lowest-weight mode histories. The formalization of this result is presented in Corollary~\ref{cor:prob}.

\begin{corollary} 
\label{cor:prob}
Consider the system in \eqref{eqn:system} under the zonotopic mixture noise model of Section \ref{subsec:system}, with a fixed data record $\{\mathbf{u}_{1:N},\mathbf{y}_{1:N}\}$, from which the quantities $\{(\gamma^\ell_{t|t}, \zon{\hat{\mathbf{x}}^\ell_{t|t}} {\mathbf{E}^\ell_{t|t}})\}_{\ell=1}^{M_{t|t}}$ and $\mathcal{A}_{t|t}$ of Theorem~\ref{thm:zmf} are computed. Let $\mathcal{X}_t(\mathbf{h}_t)$ denote the set of states $\mathbf{x}_t$ consistent with the mode history $\mathbf{h}_t$ and $\mathbf{y}_{1:t}$. Then, for every $t=1,\dots,N$ and every index set $\mathcal{S}\subseteq\{1,\dots,M_{t|t}\}$,
\begin{equation}
\label{eqn:corprob}
\mathbb{P}\left\{\mathcal{X}_t(\mathbf{h}_{t|t})\subseteq \bigcup_{\ell\in \mathcal{S}} \zon{\hat{\mathbf{x}}^\ell_{t|t}}{\mathbf{E}^\ell_{t|t}}\hspace{0.03cm} \big| \, \mathbf{h}_{t|t}\in\mathcal{A}_{t|t}\right\} \geq \sum_{\ell\in\mathcal{S}} \gamma_{t|t}^\ell.
\end{equation}
\end{corollary}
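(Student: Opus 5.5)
The plan is to reduce the probability in \eqref{eqn:corprob} to a probability over the discrete mode history $\mathbf{h}_{t|t}$. Once the data record is fixed, that history is the only random object in the model. The key observation is an event inclusion. For each $\ell\in\mathcal{S}$, Statement (i) of Theorem~\ref{thm:zmf} says that every state trajectory consistent with $\mathbf{h}^\ell_{t|t}$ and $\mathbf{y}_{1:t}$ satisfies $\mathbf{x}_t\in\zon{\hat{\mathbf{x}}^\ell_{t|t}}{\mathbf{E}^\ell_{t|t}}$. Hence $\mathcal{X}_t(\mathbf{h}^\ell_{t|t})\subseteq \zon{\hat{\mathbf{x}}^\ell_{t|t}}{\mathbf{E}^\ell_{t|t}}$, which is contained in $\bigcup_{\ell'\in\mathcal{S}}\zon{\hat{\mathbf{x}}^{\ell'}_{t|t}}{\mathbf{E}^{\ell'}_{t|t}}$. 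This holds trivially if $\mathcal{X}_t(\mathbf{h}^\ell_{t|t})$ is empty. It follows that
\[
\{\mathbf{h}_{t|t}=\mathbf{h}^\ell_{t|t} \text{ for some } \ell\in\mathcal{S}\}\subseteq \Big\{\mathcal{X}_t(\mathbf{h}_{t|t})\subseteq \textstyle\bigcup_{\ell\in\mathcal{S}} \zon{\hat{\mathbf{x}}^\ell_{t|t}}{\mathbf{E}^\ell_{t|t}}\Big\}.
\]
I will intersect both sides with the consistency event $\{\mathbf{h}_{t|t}\in\mathcal{A}_{t|t}\}$. This event has positive probability, because the true history is never falsified by Statement (i) and the prior weights are positive.

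Next I will use monotonicity of conditional probability. The events $\{\mathbf{h}_{t|t}=\mathbf{h}^\ell_{t|t}\}$ are disjoint for distinct $\ell$, because the indexing enumerates distinct realizations of $\mathcal{H}_{t|t}$. The left-hand side of \eqref{eqn:corprob} is therefore bounded below by $\sum_{\ell\in\mathcal{S}}\mathbb{P}\{\mathbf{h}_{t|t}=\mathbf{h}^\ell_{t|t}\mid \mathbf{h}_{t|t}\in\mathcal{A}_{t|t}\}$. By Statement (ii) of Theorem~\ref{thm:zmf}, each summand equals $\gamma^\ell_{t|t}$ as given in \eqref{eq:wexact}. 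For indices with $b^\ell_{t|t}=0$, both sides are zero, so those indices contribute nothing, consistently on both sides.

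The main obstacle is making the probability statement well-posed. The set $\mathcal{X}_t(\mathbf{h}_{t|t})$ depends on the random history and on the fixed data. However, the in-mode noise carries no distribution, so $\mathbb{P}$ must be understood as the measure on the mode space alone. I will state this explicitly: once $\mathbf{y}_{1:t}$ is fixed, $\mathcal{X}_t(\cdot)$ is a deterministic set-valued map of the history, so the event in \eqref{eqn:corprob} is a subset of $\mathcal{H}_{t|t}$ and is measurable because $\mathcal{H}_{t|t}$ is finite. Under this reading, the bound holds uniformly over all in-mode noise laws, which is the robustness claim. I also expect the inequality, rather than an equality, to come from two sources. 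First, histories outside $\mathcal{S}$ may still have their consistent sets covered by the union. Second, surviving but actually inconsistent histories are included in $\mathcal{A}_{t|t}$ (Remark~\ref{rem:conservatism}); this is harmless here because conditioning is on $\mathcal{A}_{t|t}$ itself, exactly as in \eqref{eq:wexact}.
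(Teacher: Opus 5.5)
Your proposal is correct and is essentially the paper's argument. The paper applies the law of total probability over the branches $\mathbf{h}^\ell_{t|t}$, uses Theorem~\ref{thm:zmf}(i) to give the enclosure event conditional probability one on each branch in $\mathcal{S}$, and drops the remaining nonnegative terms, which is your event inclusion plus finite additivity combined with Theorem~\ref{thm:zmf}(ii). Your packaging additionally avoids conditioning on the null events $\{\mathbf{h}_{t|t}=\mathbf{h}^\ell_{t|t}\}$ with $b^\ell_{t|t}=0$, and it makes explicit that $\mathbb{P}$ lives on the finite mode space.
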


\begin{pf}
Let $\mathcal{E}_t$ denote the event in \eqref{eqn:corprob}, i.e., that every $\mathbf{x}_t$ consistent with $\mathbf{h}_{t|t}$ and $\mathbf{y}_{1:t}$ belongs to $\bigcup_{s\in\mathcal{S}} \zon{\hat{\mathbf{x}}^s_{t|t}}{\mathbf{E}^s_{t|t}}$. Conditioned on $\mathbf{h}_{t|t}=\mathbf{h}^\ell_{t|t}$ with $\mathbf{h}^\ell_{t|t}\in\mathcal{A}_{t|t}$, Part (i) of Theorem \ref{thm:zmf} gives $\mathbf{x}_t\in\zon{\hat{\mathbf{x}}^\ell_{t|t}}{\mathbf{E}^\ell_{t|t}}$ for every consistent trajectory, so for $\ell\in\mathcal{S}$ the event $\mathcal{E}_t$ occurs with conditional probability one. By the law of total probability and
Part (ii) of Theorem \ref{thm:zmf},
    \begin{align}
        \mathbb{P}\left\{\mathcal{E}_t \big|\,\mathbf{h}_{t|t}\hspace{-0.03cm}\in\hspace{-0.03cm}\mathcal{A}_{t|t}\right\}\hspace{-0.06cm}&=\hspace{-0.05cm}\sum_{\ell=1}^{M_{t|t}} \hspace{-0.05cm}\gamma_{t|t}^\ell\,
\mathbb{P}\hspace{-0.03cm}\left\{\mathcal{E}_t \big|\,
\mathbf{h}_{t|t}\hspace{-0.03cm}\in\hspace{-0.03cm}\mathcal{A}_{t|t},
\mathbf{h}_{t|t}\hspace{-0.03cm}=\hspace{-0.03cm}\mathbf{h}_{t|t}^\ell\right\} \notag \\
&\geq \sum_{\ell\in\mathcal{S}} \hspace{-0.03cm}\gamma_{t|t}^\ell\,
\mathbb{P}\left\{\mathcal{E}_t \big|\,
\mathbf{h}_{t|t}\hspace{-0.03cm}\in\hspace{-0.03cm}\mathcal{A}_{t|t},
\mathbf{h}_{t|t}\hspace{-0.03cm}=\hspace{-0.03cm}\mathbf{h}_{t|t}^\ell\right\} \notag \\
&=\sum_{\ell\in\mathcal{S}} \gamma_{t|t}^\ell, \notag 
    \end{align}
where the final equality uses that $\mathcal{E}_t$ has conditional probability one on each surviving branch in $\mathcal{S}$. \hfill $\square$
\end{pf}

\begin{rem}
\label{rem:invalidation}
Under the model assumptions, the realized mode history is consistent with the data, so the normalization in \eqref{eq:wmeas} is well defined for all $t$. Conversely,  $\sum_{s=1}^{M_{t|t}} \bar{\gamma}^s_{t|t} = 0$ certifies that no admissible mode history can explain the data, which can be used for fault detection~\cite{wang2018zonotopic}.
\end{rem}

\subsection{Relation to the Gaussian sum filter and a discrete Bayes filter}
\label{sec:relations}

The ZMF is closely related in structure to two mixture filters: the Gaussian sum filter (GSF) \cite{alspach1972nonlinear,kitagawa1987non}, and the exact discrete Bayes filter (DBF), which is a type of point-mass filter \cite{bergman1999recursive}. All three share the categorical mode-selection structure of Section~\ref{sec:setup}, i.e., the indices $m_t$ and $n_t$ drawn with weights $\alpha_i$ and $\eta_j$ respectively, and differ only in how the noise is modeled within each mode. The GSF and DBF replace the zonotopic mixture noise of the ZMF by the respective stochastic models
\begin{align}
\label{gmmnoise}
    \mathbf{w}_t &\hspace{-0.03cm}\sim\hspace{-0.03cm} \sum_{i=1}^W \alpha_i \mathcal{N}(\mathbf{w}_t;\bm{\mu}_i,\bm{\Sigma}_i^w),\hspace{-0.03cm}\quad \hspace{-0.1cm}  \mathbf{v}_t \hspace{-0.03cm}\sim \hspace{-0.03cm} \sum_{j=1}^V \eta_j \mathcal{N}(\mathbf{v}_t;\bm{\rho}_j,\bm{\Sigma}_j^v), \\
    \label{pmfnoise}
    \mathbf{w}_t &\sim \sum_{i=1}^W \alpha_i \delta(\mathbf{w}_t,\bm{\mu}_i), \quad \mathbf{v}_t \sim \sum_{j=1}^V \eta_j \delta(\mathbf{v}_t,\bm{\rho}_j).
\end{align}
Under the discrete model \eqref{pmfnoise}, the reachable state set has finite cardinality at every time step, and the exact discrete Bayes filter can be computed from the point-mass recursion stated in Lemma~\ref{lem:pmf}.

\begin{lemma}
    \label{lem:pmf}
    Consider the system in \eqref{eqn:system} under the discrete noise model \eqref{pmfnoise}, with data $\{\mathbf{u}_{1:N},\mathbf{y}_{1:N}\}$. Initialize $M_{1|0} = 1$, $\gamma^1_{1|0} = 1$, and $\hat{\mathbf{x}}^1_{1|0} = \bar{\mathbf{x}}_1$, and compute recursively, for $t = 1,\dots,N$:
    
    \emph{Measurement update}: For $\ell = (k-1)V + j$ with $k = 1,\dots,M_{t|t-1}$ and $j \in \mathcal{I}_V$, set $M_{t|t} = VM_{t|t-1}$, and
    \begin{align}
    \gamma^\ell_{t|t} &= \bar{\gamma}^\ell_{t|t} \Big(\textstyle\sum_{s=1}^{M_{t|t}}\bar{\gamma}^s_{t|t}\Big)^{-1},\\
    \bar{\gamma}^\ell_{t|t} &= \eta_j\,\gamma^k_{t|t-1} \delta\big(\bm{\nu}_t^\ell,\mathbf{0}\big), \label{eqn:bar_gamma_filtering_theorem}\\
    \bm{\nu}_t^\ell &= \mathbf{y}_t - \mathbf{C}\hat{\mathbf{x}}^k_{t|t-1}-\mathbf{D}\mathbf{u}_t - \boldsymbol{\rho}_j,\\
    \label{eqn:hatxfiltering}
    \hat{\mathbf{x}}^\ell_{t|t} &= \hat{\mathbf{x}}^k_{t|t-1}.
    \end{align}

    \emph{Time update}: For $k = (\ell-1)W + i$ with $\ell = 1,\dots,M_{t|t}$ and $i \in \mathcal{I}_W$, set $M_{t+1|t} = WM_{t|t}$, and
    \begin{align}
    \gamma^k_{t+1|t} &= \alpha_i\,\gamma^\ell_{t|t},
    \label{eqn:bargamma1_filtering_theorem}\\
    \hat{\mathbf{x}}^k_{t+1|t} &= \mathbf{A}\hat{\mathbf{x}}^\ell_{t|t} + \mathbf{B}\mathbf{u}_t + \boldsymbol{\mu}_i.
    \label{eqn:hat_x1_filtering_theorem}
    \end{align}
    Then, for every $t = 1,\dots,N$, the reachable state set $\mathcal{X}_t$ is finite, and the exact Bayesian posteriors are given by
    \begin{align}
    \label{eqn:filtering_pmf}
    P(\mathbf{x}_t|\mathbf{y}_{1:t}) &= \sum_{\ell=1}^{M_{t|t}} \gamma^\ell_{t|t} \delta\big(\mathbf{x}_t,\hat{\mathbf{x}}^\ell_{t|t}\big),\\
    \label{pmf_prediction}
    P(\mathbf{x}_{t+1}|\mathbf{y}_{1:t}) &=\sum_{k=1}^{M_{t+1|t}} \gamma^k_{t+1|t} \delta\big(\mathbf{x}_{t+1},\hat{\mathbf{x}}^k_{t+1|t}\big).
    \end{align}
\end{lemma}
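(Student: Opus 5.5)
The plan is to prove Lemma~\ref{lem:pmf} by induction on the filter steps, in the same way as Part (ii) of Theorem~\ref{thm:zmf}. The exact discrete noise model \eqref{pmfnoise} plays the role of the zonotopic mixture with $\mathbf{E}_1=\mathbf{0}$, $\mathbf{Q}_i=\mathbf{0}$, $\mathbf{R}_j=\mathbf{0}$ (Remark~\ref{rem:gmm_relation}). Under this model, a mode history $\mathbf{h}^\ell_{t|t}$ determines the state trajectory deterministically. Hence the branch states $\hat{\mathbf{x}}^\ell_{t|t}$ and $\hat{\mathbf{x}}^k_{t+1|t}$ are exactly the state values attained along each history, and the reachable set is contained in the finite set $\{\hat{\mathbf{x}}^k_{t|t-1}\}$ of cardinality at most $(VW)^{t-1}$. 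This gives finiteness directly.

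The core claim is the joint statement
\[
P(\mathbf{h}_{t|t}=\mathbf{h}^\ell_{t|t}\mid\mathbf{y}_{1:t})=\gamma^\ell_{t|t},\qquad \mathbf{x}_t=\hat{\mathbf{x}}^\ell_{t|t}\ \text{on}\ \{\mathbf{h}_{t|t}=\mathbf{h}^\ell_{t|t}\}.
\]
The posterior \eqref{eqn:filtering_pmf} then follows by marginalizing over histories: $P(\mathbf{x}_t|\mathbf{y}_{1:t})=\sum_\ell P(\mathbf{h}^\ell_{t|t}|\mathbf{y}_{1:t})\,\delta(\mathbf{x}_t,\hat{\mathbf{x}}^\ell_{t|t})$, with coincident atoms simply adding their weights. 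The same argument applies to \eqref{pmf_prediction}. The base case is $\gamma^1_{1|0}=1$ and $\mathbf{x}_1=\bar{\mathbf{x}}_1$ almost surely, since $\mathbf{E}_1=\mathbf{0}$ in the point-mass setting. I interpret the initial condition as deterministic here, because the lemma initializes with a single atom.

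\emph{Measurement step.} Take $\ell=(k-1)V+j$. Given $\mathbf{h}^k_{t|t-1}$ and $n_t=j$, the output is $\mathbf{y}_t=\mathbf{C}\hat{\mathbf{x}}^k_{t|t-1}+\mathbf{D}\mathbf{u}_t+\bm{\rho}_j$ with probability one. So the conditional probability mass of the observed $\mathbf{y}_t$ is $\delta(\bm{\nu}^\ell_t,\mathbf{0})$. Using independence of $n_t$ from the past, Bayes' rule gives a posterior proportional to $\eta_j\gamma^k_{t|t-1}\delta(\bm{\nu}^\ell_t,\mathbf{0})$, which is \eqref{eqn:bar_gamma_filtering_theorem} up to normalization. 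The state is unchanged by a measurement, which gives \eqref{eqn:hatxfiltering}.

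\emph{Time step.} Since $m_t$ is independent of $\mathbf{y}_{1:t}$ and of the past history, the weight is multiplied by $\alpha_i$ as in \eqref{eqn:bargamma1_filtering_theorem}, and the state propagates deterministically by \eqref{eqn:hat_x1_filtering_theorem}. No normalization is needed because $\sum_i\alpha_i=1$.

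I expect the main obstacle to be measure-theoretic rather than algebraic. Because the outputs take finitely many values, the data have a probability mass function, so conditioning on $\mathbf{y}_{1:t}$ is elementary. The point to justify carefully is that the normalizer is positive for the realized data, as argued in Remark~\ref{rem:invalidation}. The second point is that the Kronecker delta correctly encodes the likelihood even when different histories produce the same output atom. This is handled by working with mass functions over histories and only marginalizing to states at the end.
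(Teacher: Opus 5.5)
Your proof is correct, but it is organized differently from the paper's. The paper (Appendix A) never introduces mode histories as random variables. It runs the standard Bayesian prediction--correction recursion directly on the state PMF. For the measurement update, it writes the likelihood as the mixture $P(\mathbf{y}_t|\mathbf{x}_t)=\sum_{j}\eta_j\delta(\mathbf{y}_t,\mathbf{C}\mathbf{x}_t+\mathbf{D}\mathbf{u}_t+\bm{\rho}_j)$ and multiplies it by the induction hypothesis $\sum_k\gamma^k_{t|t-1}\delta(\mathbf{x}_t,\hat{\mathbf{x}}^k_{t|t-1})$, so that the second Kronecker delta turns the first into $\delta(\bm{\nu}^\ell_t,\mathbf{0})$. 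The time update then follows from the Chapman--Kolmogorov sum with the mixture transition kernel $\sum_i\alpha_i\delta(\mathbf{x}_{t+1},\mathbf{A}\mathbf{x}_t+\mathbf{B}\mathbf{u}_t+\bm{\mu}_i)$.

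You instead apply Bayes' rule on the finite space of mode histories, on which the trajectory is a deterministic function. You prove the branchwise identity $P(\mathbf{h}_{t|t}=\mathbf{h}^\ell_{t|t}\,|\,\mathbf{y}_{1:t})=\gamma^\ell_{t|t}$ and marginalize to states only at the end.

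The paper's route is shorter. It relies on the Markov structure of the state-space model: conditioned on $\mathbf{x}_t$, the output $\mathbf{y}_t$ is independent of $\mathbf{y}_{1:t-1}$ and the next state $\mathbf{x}_{t+1}$ is independent of $\mathbf{y}_{1:t}$. It leaves the finiteness of $\mathcal{X}_t$ and the merging of coincident atoms implicit.

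Your route uses only the independence of the mode sequences. It makes finiteness, coincident atoms, and elementary conditioning on finitely-valued data explicit. It also yields a stronger branchwise statement, which is exactly the form exploited in the proof of Theorem~\ref{thm:filters-dbf}, and it exhibits the DBF weights as prior times consistency indicator, in parallel with \eqref{eq:wexact}.

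Keep the argument self-contained as you have written it, rather than invoking Theorem~\ref{thm:zmf} directly. The choice $\mathbf{R}_j=\mathbf{0}$ violates the standing assumption $\mathbf{R}_j\mathbf{R}_j^\top\succ\mathbf{0}$. Moreover, Part (ii) conditions on the consistency event rather than on $\mathbf{y}_{1:t}$; the two coincide here only because each history generates a single output sequence.
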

\begin{pf}
    See Appendix A. \hfill$\square$
\end{pf}

Because of their shared categorical structure, the three filters admit a common interpretation as a bank of base filters running in parallel, one per mode history, whose weights are updated by the agreement between each branch's predicted output and the measurement. The branches are identical across the three filters, with the GSF running a Kalman filter per history, the ZMF a ZKF, and the DBF a deterministic propagation. They share the same time update $\hat{\mathbf{x}}^k_{t+1|t}$ and the same branch bookkeeping $M_{t|t}=VM_{t|t-1}$, $M_{t+1|t}=WM_{t|t}$. The filter recursions differ only in the weight updates, given by
\begin{align}
    \hspace{-0.4cm}\textnormal{(ZMF)}\,\,\,\bar{\gamma}^\ell_{t|t} &= \eta_j\gamma^k_{t|t-1}\mathbbm{1}\big\{\boldsymbol{\nu}^\ell_t \in
\zon{\mathbf{0}}{[\mathbf{C}\mathbf{E}^k_{t|t-1},\mathbf{R}_j]}\big\}, \label{eqn:weight_zmf}\\
    \hspace{-0.4cm}\textnormal{(GSF)}\,\,\,\bar{\gamma}^\ell_{t|t} &= \eta_j\gamma^k_{t|t-1} \mathcal{N}(\boldsymbol{\nu}^\ell_t;\mathbf{0}, \mathbf{S}^\ell_t), \label{eqn:weight_gsf}\\
    \hspace{-0.4cm}\textnormal{(DBF)}\,\,\,\bar{\gamma}^\ell_{t|t} &= \eta_j \gamma_{t|t-1}^{k}\delta(\boldsymbol{\nu}^\ell_t,\mathbf{0}), \label{eqn:weight_dbf}
\end{align}
where $\mathbf{S}^\ell_t$ denotes the innovation covariance for the GSF setting. Notably, the covariation matrix of the innovation in the ZMF is
\begin{equation}
    \big[\mathbf{C}\mathbf{E}^k_{t|t-1},\,\mathbf{R}_j \big]
    \big[\mathbf{C}\mathbf{E}^k_{t|t-1},\,\mathbf{R}_j \big]^\top = \mathbf{C}\mathbf{P}^k_{t|t-1}\mathbf{C}^\top + \mathbf{R}_j \mathbf{R}_j^\top, \notag
\end{equation}
which corresponds in form to the covariance $\mathbf{S}^\ell_t$ computed in the GSF by replacing the covariation $\mathbf{R}_j \mathbf{R}_j^\top$ with the Gaussian component covariance $\bm{\Sigma}_j^v$.

To formalize the relation between the three weight rules, we study the ZMF and the GSF in the small-dispersion regime in which the noise generated from each mode concentrates around the centers $\bm{\mu}_i$ and $\bm{\rho}_j$. Theorem~\ref{thm:filters-dbf} shows that, when the data is generated by the discrete model \eqref{pmfnoise}, both filters reduce to the DBF, although in different senses. The ZMF coincides with the DBF for every dispersion below a positive threshold, whereas the GSF converges to the DBF only as the dispersion tends to zero, and under the assumption that the component covariances do not depend on the mode indices.

\begin{theorem}
\label{thm:filters-dbf}
Consider the system \eqref{eqn:system}, where the data
$\mathbf{y}_{1:N}$ is generated by the discrete noise model
\eqref{pmfnoise} with deterministic initial state $\bar{\mathbf{x}}_1$, and let
$\varepsilon>0$ be a dispersion parameter. The following statements hold:
\begin{enumerate}
\item[(i)] If the ZMF is implemented with optimal gain \eqref{optimalgainzmf}, noise zonotopes $\zon{\bm{\mu}_i}{\varepsilon\mathbf{Q}_i}$ and $\zon{\bm{\rho}_j}{\varepsilon\mathbf{R}_j}$, and initial zonotope $\zon{\bar{\mathbf{x}}_1}{\varepsilon\mathbf{E}_1}$, then there exists $\varepsilon^*>0$ such that, for every $\varepsilon\in(0,\varepsilon^*)$ and every $t\leq N$, the ZMF weights $\gamma^{\ell}_{t|t}$ and $\gamma^{k}_{t+1|t}$ coincide with those of the DBF in Lemma~\ref{lem:pmf}. Moreover, on every branch of positive weight, the ZMF centers equal the corresponding DBF atoms, and the generator matrices scale linearly with $\varepsilon$.
\item[(ii)] If the GSF is implemented with the Gaussian mixture
\eqref{gmmnoise} with mode-independent covariances
$\bm{\Sigma}_i^w=\varepsilon^2\bm{\Sigma}^w$ and
$\bm{\Sigma}_j^v=\varepsilon^2\bm{\Sigma}^v$, where
$\bm{\Sigma}^w,\bm{\Sigma}^v\succ\mathbf{0}$, and initial density
$\mathcal{N}(\mathbf{x}_1;\bar{\mathbf{x}}_1,\varepsilon^2\bm{\Sigma}_1)$,
then, as $\varepsilon\to 0$, the GSF weights $\gamma^{\ell}_{t|t}$ and
$\gamma^{k}_{t+1|t}$ converge to the respective DBF weights for every $t\leq N$, and the GSF posterior converges weakly to the DBF posterior \eqref{eqn:filtering_pmf}.
\end{enumerate}
\end{theorem}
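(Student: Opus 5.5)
The plan for Part (i) is to exploit the exact scaling structure of the ZMF and compare its branches with the DBF branch by branch. First, I will show by induction on $t$ that, with the optimal gain \eqref{optimalgainzmf} and all generator matrices scaled by $\varepsilon$, every ZMF generator matrix satisfies $\mathbf{E}^\ell_{t|t}=\varepsilon\tilde{\mathbf{E}}^\ell_{t|t}$ and $\mathbf{E}^k_{t+1|t}=\varepsilon\tilde{\mathbf{E}}^k_{t+1|t}$, where the tilde matrices do not depend on $\varepsilon$. The base case is immediate from $\mathbf{E}^1_{1|0}=\varepsilon\mathbf{E}_1$. For the induction step, the gain $\mathbf{L}^\ell_t$ is invariant under the joint scaling $\mathbf{P}^k_{t|t-1}\mapsto\varepsilon^2\mathbf{P}$ and $\mathbf{R}_j\mathbf{R}_j^\top\mapsto\varepsilon^2\mathbf{R}_j\mathbf{R}_j^\top$, since the $\varepsilon^2$ factors cancel in \eqref{optimalgainzmf}. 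Hence the updated and predicted generators are $\varepsilon$ times $\varepsilon$-free matrices, and the gains themselves are $\varepsilon$-independent.

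Second, I will show that the ZMF centers of positive-weight branches coincide with the DBF atoms. Suppose branch $k$ has positive weight and $\hat{\mathbf{x}}^k_{t|t-1}$ equals the DBF atom. If the DBF innovation $\bm{\nu}^\ell_t$ is zero, then \eqref{eq:xmeas} gives $\hat{\mathbf{x}}^\ell_{t|t}=\hat{\mathbf{x}}^k_{t|t-1}$, which matches \eqref{eqn:hatxfiltering}. The time update is identical in both filters.

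The weights then coincide provided the indicator in \eqref{eq:wmeas} equals $\delta(\bm{\nu}^\ell_t,\mathbf{0})$ on every positive-weight branch. If $\bm{\nu}^\ell_t=\mathbf{0}$, the indicator equals one because the origin lies in any zonotope centered at zero. If $\bm{\nu}^\ell_t\neq\mathbf{0}$, then $\bm{\nu}^\ell_t$ belongs to a finite set of nonzero vectors that depends only on the data and the atoms, and so does not depend on $\varepsilon$. The zonotope $\zon{\mathbf{0}}{\varepsilon[\mathbf{C}\tilde{\mathbf{E}}^k_{t|t-1},\tilde{\mathbf{R}}_j]}$ lies in a ball of radius $\varepsilon\|[\mathbf{C}\tilde{\mathbf{E}}^k_{t|t-1},\mathbf{R}_j]\|_{2,1}$. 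It therefore excludes $\bm{\nu}^\ell_t$ once $\varepsilon$ is below $\|\bm{\nu}^\ell_t\|_2/\|[\cdot]\|_{2,1}$.

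I then take $\varepsilon^*$ as the minimum of these thresholds over all $t\leq N$, all branches, and all nonzero innovations. This is a minimum over finitely many positive numbers, so $\varepsilon^*>0$. The induction must also handle zero-weight branches. Their centers may differ from the DBF atoms, but their weights stay zero in both filters because $\gamma$ enters multiplicatively, so they never affect anything. The normalizing sums agree because the DBF sum is positive by Remark~\ref{rem:invalidation}, applied to the true discrete history.

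For Part (ii), the GSF branch with a mode-independent covariance has scale-invariant Kalman gains, and its covariances satisfy $\mathbf{S}^\ell_t=\varepsilon^2\tilde{\mathbf{S}}_t$ with $\tilde{\mathbf{S}}_t$ the same for all branches, by the same induction as above. The unnormalized weight is then
\begin{equation*}
\bar{\gamma}^\ell_{t|t}=\eta_j\gamma^k_{t|t-1}\,(2\pi\varepsilon^2)^{-n_y/2}\det(\tilde{\mathbf{S}}_t)^{-1/2}\exp\!\big(-\|\bm{\nu}^\ell_t\|^2_{\tilde{\mathbf{S}}_t^{-1}}/(2\varepsilon^2)\big).
\end{equation*}
The prefactor is common to all branches and cancels after normalization, which is exactly where mode independence is used. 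Branches with $\bm{\nu}^\ell_t\neq\mathbf{0}$ are then suppressed by the factor $\exp(-c/\varepsilon^2)\to0$, while those with zero innovation keep the ratio $\eta_j\gamma^k_{t|t-1}$, so the normalized weights converge to the DBF weights.

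The main obstacle is that the GSF centers shift by $\mathbf{L}\bm{\nu}$ even on suppressed branches. As a result, the GSF innovations at later times are not exactly the DBF innovations, and the weights must be controlled jointly with these shifts. I would handle this by induction, proving that at every step the weights converge and, on branches whose limit weight is positive, the centers equal the DBF atoms exactly. On the remaining branches the weights vanish, with a ratio to the surviving weights bounded by $\exp(-c/\varepsilon^2)$ times terms of at most polynomial growth in $1/\varepsilon$, so their contribution still tends to zero. Weak convergence of the posterior follows because the Gaussian components have covariance $O(\varepsilon^2)$, centers converging to the atoms, and weights converging; testing against bounded continuous functions then gives the result.
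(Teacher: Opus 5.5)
Your proposal is correct and follows the paper's proof. Part (i) is the same argument: the $\varepsilon^2$ cancellation in the optimal gain, centers equal to DBF atoms on positive-weight branches, exclusion of a nonzero innovation from a zonotope contained in the ball of radius $\varepsilon\|[\mathbf{C}\tilde{\mathbf{E}}^k_{t|t-1},\mathbf{R}_j]\|_{2,1}$, and $\varepsilon^*$ taken as a minimum over finitely many thresholds. Part (ii) uses the same mechanism of a cancelling common prefactor plus exponential suppression of any branch with a nonzero innovation. The one simplification you missed is that, since the GSF gains are $\varepsilon$-independent, all GSF means and innovations (including those on suppressed branches) are $\varepsilon$-independent. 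The paper therefore telescopes the weights into $\pi(\mathbf{h}^\ell_{t|t})e^{-q_\ell/(2\varepsilon^2)}$, normalized over branches, with a fixed $q_\ell\geq 0$ that vanishes iff every DBF innovation along the branch vanishes. This removes your ``main obstacle'' and the polynomial-growth bookkeeping (the ratio bounds are in fact by constants), although your stepwise induction remains valid.
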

\vspace{-0.25cm}
\begin{pf}
    See Appendix B. \hfill$\square$
\end{pf}
\vspace{-0.25cm}

The connections in Theorem \ref{thm:filters-dbf} are illustrated in Figure~\ref{fig:filter-relations}. The asymmetry between the convergence results stems from the support of each noise model. The bounded support of the zonotopic mixtures allows the ZMF to falsify incompatible histories for finite $\varepsilon$, whereas the full support of the Gaussian mixtures in the GSF downweights the history likelihoods without ever eliminating them, so exact agreement is reached only in the limit.

\begin{figure}[t]
\centering
\begin{tikzpicture}[>=stealth, font=\footnotesize, scale=0.91, transform shape]
  \draw[fill=gray!15] (-4.0,0.85) -- (-2.95,0.85) -- (-2.78,1.55) -- (-3.83,1.55) -- cycle;
  \draw[fill=gray!15] (-3.82,-0.05) -- (-3.02,-0.05) -- (-2.87,0.55) -- (-3.67,0.55) -- cycle;
   \draw[fill=gray!15] (-3.74,-1.45) -- (-3.16,-1.45) -- (-3.04,-1.1) -- (-3.62,-1.1) -- cycle;
   \node at (-2.55,1.2)   {$\alpha_1$};
  \node at (-2.70,0.25)  {$\alpha_2$};
  \node at (-2.75,-1.3)  {$\alpha_W$};
  \node at (-3.4,-0.55) {$\vdots$};
  \fill (-0.15,1.2)  circle (2.4pt); \node at (0.25,1.2)  {$\alpha_1$};
  \fill ( 0.10,0.3)  circle (1.9pt); \node at (0.45,0.3)  {$\alpha_2$};
  \fill (-0.05,-1.3) circle (1.4pt); \node at (0.32,-1.3) {$\alpha_W$};
  \node at (0.0,-0.55) {$\vdots$};
  \shade[inner color=black!55, outer color=black!2] (3.1,1.2) ellipse (0.81 and 0.58);
  \shade[inner color=black!55, outer color=black!2] (3.38,0.3) ellipse (0.67 and 0.46);
  \shade[inner color=black!55, outer color=black!2] (3.12,-1.3) ellipse (0.52 and 0.36);
  \node at (4.1,1.2)  {$\alpha_1$};
  \node at (4.25,0.3)  {$\alpha_2$};
  \node at (3.95,-1.3) {$\alpha_W$};
  \node at (3.2,-0.55) {$\vdots$};
  \node at (-1.4,0.8) {Generator matrices};
  \draw[->] (-2,0.1) -- (-0.65,0.1) node[midway,above] {$\varepsilon\mathbf{Q}_i,\ \varepsilon< \varepsilon^*$};
  \node at (1.64,0.8) {Covariances};
  \draw[->] ( 2.3,0.1) -- ( 0.95,0.1) node[midway,above] {$\varepsilon^2\bm{\Sigma}^w,\ \varepsilon\to 0$};
  \node[align=center] at (-3.6,-2.4) {Zonotopic\\mixture (ZMF)};
  \node[align=center] at ( 0.0,-2.4)  {Discrete\\mixture (DBF)};
  \node[align=center] at ( 3.1,-2.4)  {Gaussian\\mixture (GSF)};
\end{tikzpicture}
\vspace{-0.55cm}
\caption{Relationship between the ZMF, GSF, and DBF. The three filters share the same mixture structure over the categorical distribution with weights $\alpha_i$ (only the state noise is depicted). The ZMF coincides with the DBF for every dispersion below a finite threshold $\varepsilon^*$, while the GSF with mode-independent covariances converges to the DBF as $\varepsilon\to 0$.}
\label{fig:filter-relations}
\end{figure}

\vspace{-0.15cm}
\section{Zonotopic mixture reduction}
\label{sec:implementation}
\vspace{-0.15cm}
Analogously to the GSF \cite{alspach1972nonlinear}, the number of components describing each state zonotopic mixture in Theorem~\ref{thm:zmf} grows exponentially over time, which limits the applicability of the ZMF if left unhandled. Hence, this section proposes a zonotopic mixture reduction (ZMR) scheme that maintains computational tractability while keeping the probabilistic enclosure properties of the filter.

A simple approach is to discard the components with the lowest weights, known as pruning in the Gaussian mixture reduction literature \cite{crouse2011look}. This method is inexpensive, but the union of the retained zonotopes no longer contains every state consistent with the data, so the enclosure statements of Theorem~\ref{thm:zmf} and Corollary~\ref{cor:prob} no longer hold. We therefore first formalize the requirements that a ZMR scheme must meet for these guarantees to be preserved, and then present the proposed reduction mechanism.

\vspace{-0.15cm}
\subsection{Admissible reductions}
\vspace{-0.15cm}

In contrast to Gaussian sum reduction, which is commonly posed as a moment-matching or density-approximation problem \cite{crouse2011look}, ZMR must preserve the set-membership property of Definition~\ref{def:zonotopic_mixture}, in the sense that every realization of the original mixture must remain a realization of the reduced one. This requirement is formalized next.

\begin{definition}[Admissible ZMR]
\label{def:zm_reduction}
Consider the zonotopic mixture $\{(\alpha_i,\langle \mathbf{c}_i, \mathbf{E}_i \rangle)\}_{i=1}^{M}$, and let $\{\mathcal{I}_j\}_{j=1}^{m}$, with $m\leq M$, be a partition of $\{1,\dots,M\}$. A zonotopic mixture $\{(\tilde{\alpha}_j,\langle \tilde{\mathbf{c}}_j, \tilde{\mathbf{E}}_j \rangle)\}_{j=1}^{m}$ is an \emph{admissible reduction} of $\{(\alpha_i,\langle \mathbf{c}_i, \mathbf{E}_i \rangle)\}_{i=1}^{M}$ if, for all $j\in\{1,\dots,m\}$,
\begin{equation}
    \langle \tilde{\mathbf{c}}_j, \tilde{\mathbf{E}}_j \rangle \supseteq
    \bigcup_{i\in\mathcal{I}_j} \langle \mathbf{c}_i, \mathbf{E}_i \rangle,
    \qquad
    \tilde{\alpha}_j = \sum_{i\in\mathcal{I}_j} \alpha_i.
    \label{eqn:zm_reduction_conditions}
\end{equation}
\end{definition}
\vspace{-0.25cm}

The two conditions in \eqref{eqn:zm_reduction_conditions} can be exploited to transfer the probabilistic enclosure guarantees of Corollary~\ref{cor:prob} to the reduced mixture, as shown in Corollary \ref{cor:reduced_prob}.

\begin{corollary}
\label{cor:reduced_prob}
Consider the setup of Corollary~\ref{cor:prob}, and let $\{(\tilde{\gamma}^{j}_{t|t},\zon{\tilde{\mathbf{x}}^{j}_{t|t}}{\tilde{\mathbf{E}}^{j}_{t|t}})\}_{j=1}^{m_{t|t}}$ be an admissible reduction of $\{(\gamma^{\ell}_{t|t},\zon{\hat{\mathbf{x}}^{\ell}_{t|t}}{\mathbf{E}^{\ell}_{t|t}})\}_{\ell=1}^{M_{t|t}}$ with partition $\{\mathcal{I}_j\}_{j=1}^{m_{t|t}}$. Then, for every index set $\tilde{\mathcal{S}}\subseteq\{1,\dots,m_{t|t}\}$,
\begin{equation}
\mathbb{P}\bigg\{\mathcal{X}_t(\mathbf{h}_{t|t})\subseteq \bigcup_{j\in \tilde{\mathcal{S}}}
\zon{\tilde{\mathbf{x}}^{j}_{t|t}}{\tilde{\mathbf{E}}^{j}_{t|t}}\hspace{0.03cm} \big| \, \mathbf{h}_{t|t}\in\mathcal{A}_{t|t}\bigg\} \geq \sum_{j\in\tilde{\mathcal{S}}} \tilde{\gamma}_{t|t}^{j}.
\label{eqn:reduced_enclosure}
\end{equation}
\end{corollary}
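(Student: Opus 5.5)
The plan is to reduce the statement to Corollary~\ref{cor:prob} by pulling the reduced index set $\tilde{\mathcal{S}}$ back to an index set of the original mixture. Given $\tilde{\mathcal{S}}\subseteq\{1,\dots,m_{t|t}\}$, define
\begin{equation*}
\mathcal{S}:=\bigcup_{j\in\tilde{\mathcal{S}}}\mathcal{I}_j\subseteq\{1,\dots,M_{t|t}\}.
\end{equation*}
Two facts then need to be checked, and each uses one of the two conditions in \eqref{eqn:zm_reduction_conditions}.

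The first fact is set inclusion. By the containment condition of Definition~\ref{def:zm_reduction},
\begin{equation*}
\bigcup_{\ell\in\mathcal{S}}\zon{\hat{\mathbf{x}}^\ell_{t|t}}{\mathbf{E}^\ell_{t|t}}=\bigcup_{j\in\tilde{\mathcal{S}}}\bigcup_{\ell\in\mathcal{I}_j}\zon{\hat{\mathbf{x}}^\ell_{t|t}}{\mathbf{E}^\ell_{t|t}}\subseteq\bigcup_{j\in\tilde{\mathcal{S}}}\zon{\tilde{\mathbf{x}}^j_{t|t}}{\tilde{\mathbf{E}}^j_{t|t}}.
\end{equation*}
So the event in \eqref{eqn:corprob} associated with $\mathcal{S}$ is contained in the event in \eqref{eqn:reduced_enclosure}. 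The reason is that if $\mathcal{X}_t(\mathbf{h}_{t|t})$ lies in the smaller union, it also lies in the larger one. Monotonicity of the conditional probability measure then says the left-hand side of \eqref{eqn:reduced_enclosure} is at least the left-hand side of \eqref{eqn:corprob} evaluated at this $\mathcal{S}$.

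The second fact is a weight identity. Because $\{\mathcal{I}_j\}$ is a partition, the sets $\mathcal{I}_j$ for $j\in\tilde{\mathcal{S}}$ are disjoint. Combined with the weight condition $\tilde{\gamma}^j_{t|t}=\sum_{\ell\in\mathcal{I}_j}\gamma^\ell_{t|t}$, this gives
\begin{equation*}
\sum_{j\in\tilde{\mathcal{S}}}\tilde{\gamma}^j_{t|t}=\sum_{\ell\in\mathcal{S}}\gamma^\ell_{t|t}.
\end{equation*}
Chaining the monotonicity step, Corollary~\ref{cor:prob} applied to $\mathcal{S}$, and this identity yields \eqref{eqn:reduced_enclosure}.

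The argument is essentially bookkeeping, so I do not expect a real obstacle. The one point to state carefully is that the conditioning event $\{\mathbf{h}_{t|t}\in\mathcal{A}_{t|t}\}$ and the set $\mathcal{X}_t(\mathbf{h}_{t|t})$ both refer to the original, unreduced mode histories. The reduction only relabels which zonotopes appear in the union and which weights are summed, so the probability space is unchanged and Corollary~\ref{cor:prob} applies verbatim. The disjointness of the partition is what prevents double counting in the weight identity and should be invoked explicitly.
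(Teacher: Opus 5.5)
Your proposal is correct and matches the paper's proof step for step. You pull $\tilde{\mathcal{S}}$ back to $\mathcal{S}=\bigcup_{j\in\tilde{\mathcal{S}}}\mathcal{I}_j$, use the containment condition and monotonicity to invoke Corollary~\ref{cor:prob}, and convert the weights via disjointness of the partition together with $\tilde{\gamma}^j_{t|t}=\sum_{\ell\in\mathcal{I}_j}\gamma^\ell_{t|t}$.
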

\vspace{-0.25cm}
\begin{pf}
Define $\mathcal{S}:=\bigcup_{j\in\tilde{\mathcal{S}}}\mathcal{I}_j\subseteq\{1,\dots,M_{t|t}\}$. The first condition in \eqref{eqn:zm_reduction_conditions} implies $\bigcup_{j\in\tilde{\mathcal{S}}}\zon{\tilde{\mathbf{x}}^{j}_{t|t}}{\tilde{\mathbf{E}}^{j}_{t|t}} \supseteq \bigcup_{\ell\in\mathcal{S}}\zon{\hat{\mathbf{x}}^{\ell}_{t|t}}{\mathbf{E}^{\ell}_{t|t}}$, so the event $\big\{\mathcal{X}_t(\mathbf{h}_{t|t})\subseteq \bigcup_{\ell\in\mathcal{S}}\zon{\hat{\mathbf{x}}^{\ell}_{t|t}}{\mathbf{E}^{\ell}_{t|t}}\big\}$ implies the event in \eqref{eqn:reduced_enclosure}. Hence, by monotonicity of the probability measure and Corollary~\ref{cor:prob},
\begin{multline}
\mathbb{P}\bigg\{\mathcal{X}_t(\mathbf{h}_{t|t})\subseteq \bigcup_{j\in \tilde{\mathcal{S}}}
\zon{\tilde{\mathbf{x}}^{j}_{t|t}}{\tilde{\mathbf{E}}^{j}_{t|t}} \, \big| \, \mathbf{h}_{t|t}\in\mathcal{A}_{t|t}\bigg\} \\
\geq \sum_{\ell\in\mathcal{S}} \gamma_{t|t}^{\ell} = \sum_{j\in\tilde{\mathcal{S}}}\sum_{\ell\in\mathcal{I}_j} \gamma_{t|t}^{\ell} = \sum_{j\in\tilde{\mathcal{S}}} \tilde{\gamma}_{t|t}^{j}, \notag 
\end{multline}
where the first equality holds since the index sets $\{\mathcal{I}_j\}$ are disjoint, and the second follows from the second condition in \eqref{eqn:zm_reduction_conditions}. \hfill $\square$
\end{pf}

Beyond preserving the enclosure guarantees, an admissible reduction also yields a valid zonotopic mixture description of the state, so that the recursion of Theorem~\ref{thm:zmf} may continue from the reduced mixture. This is a consequence of the following result.

\vspace{-0.15cm}
\begin{proposition}
\label{prop:zm_containment}
Let $\mathbf{w}\in\mathbb{R}^n$ follow the zonotopic mixture $\{(\alpha_i,\langle \mathbf{c}_i, \mathbf{E}_i \rangle)\}_{i=1}^{M}$, and let $\{(\tilde{\alpha}_j,\langle \tilde{\mathbf{c}}_j, \tilde{\mathbf{E}}_j \rangle)\}_{j=1}^{m}$ be an admissible reduction of it. Then, $\mathbf{w}$ also follows the reduced zonotopic mixture $\{(\tilde{\alpha}_j,\langle \tilde{\mathbf{c}}_j, \tilde{\mathbf{E}}_j \rangle)\}_{j=1}^{m}$.
\end{proposition}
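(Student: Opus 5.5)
The approach is to build the representation required by Definition~\ref{def:zonotopic_mixture} for the reduced mixture directly, using the partition $\{\mathcal{I}_j\}_{j=1}^m$ of the admissible reduction. By hypothesis, $\mathbf{w}=\mathbf{c}_m+\mathbf{E}_m\mathbf{s}_m$ with $\mathbb{P}\{m=i\}=\alpha_i$ and $\|\mathbf{s}_m\|_\infty\leq 1$. I will define the reduced mode index as $\tilde m := j$ whenever $m\in\mathcal{I}_j$. This is well defined and takes values in $\{1,\dots,m\}$ because $\{\mathcal{I}_j\}$ partitions $\{1,\dots,M\}$.

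\textbf{Step 1: probabilities.} Since the events $\{m=i\}$, $i\in\mathcal{I}_j$, are disjoint, the second condition in \eqref{eqn:zm_reduction_conditions} gives
\begin{equation*}
\mathbb{P}\{\tilde m=j\}=\sum_{i\in\mathcal{I}_j}\mathbb{P}\{m=i\}=\sum_{i\in\mathcal{I}_j}\alpha_i=\tilde\alpha_j .
\end{equation*}
These weights are positive and sum to one, as Definition~\ref{def:zonotopic_mixture} requires.

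\textbf{Step 2: membership.} On the event $\tilde m=j$ we have $m=i$ for some $i\in\mathcal{I}_j$. Then $\mathbf{w}\in\zon{\mathbf{c}_i}{\mathbf{E}_i}$, which lies in the union over $\mathcal{I}_j$ and hence in $\zon{\tilde{\mathbf{c}}_j}{\tilde{\mathbf{E}}_j}$ by the first condition in \eqref{eqn:zm_reduction_conditions}. By the definition of a zonotope, there is therefore a vector $\tilde{\mathbf{s}}_j$ with $\|\tilde{\mathbf{s}}_j\|_\infty\leq 1$ and $\mathbf{w}=\tilde{\mathbf{c}}_j+\tilde{\mathbf{E}}_j\tilde{\mathbf{s}}_j$. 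This is exactly the representation \eqref{eqn:zm_realization} for the reduced mixture.

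\textbf{Main obstacle.} The only subtlety is that $\tilde{\mathbf{s}}_j$ may not be unique, since $\tilde{\mathbf{E}}_j$ need not be injective, and it must be chosen as a function of $\mathbf{w}$ and $j$. Definition~\ref{def:zonotopic_mixture} places no probabilistic requirement on $\mathbf{s}$, so any selection suffices. If measurability were wanted, I would take, for instance, the minimum-norm solution of a linear program over the compact feasible set. Beyond this point the argument is bookkeeping, and together with Steps 1 and 2 it completes the proof.
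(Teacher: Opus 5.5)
Your proof is correct and matches the paper's argument: the paper also defines the reduced index $\tilde m=j$ iff the original mode lies in $\mathcal{I}_j$, and sums the $\alpha_i$ over each block to obtain $\tilde\alpha_j$. It then uses the first admissibility condition to place $\mathbf{w}$ in $\zon{\tilde{\mathbf{c}}_{\tilde m}}{\tilde{\mathbf{E}}_{\tilde m}}$. Your remark on selecting $\tilde{\mathbf{s}}_j$ is a harmless refinement that the paper leaves implicit, and you could avoid the clash between the mode index $m$ and the number of reduced components $m$ by writing $m^\star$, as the paper does.
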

\vspace{-0.15cm}

\begin{pf}
Let $m^\star$ denote the mode index of the original mixture, and define the random index $\tilde{m}\in\{1,\dots,m\}$ by $\tilde{m}=j$ if and only if $m^\star\in\mathcal{I}_j$, which is well defined since $\{\mathcal{I}_j\}_{j=1}^{m}$ is a partition. Then, $\mathbb{P}\{\tilde{m}=j\} = \sum_{i\in\mathcal{I}_j} \mathbb{P}\{m^\star=i\} = \tilde{\alpha}_j$. Conditioned on $m^\star=i$, the realization \eqref{eqn:zm_realization} yields $\mathbf{w}=\mathbf{c}_i+\mathbf{E}_i\mathbf{s}_i\in \langle\mathbf{c}_i, \mathbf{E}_i\rangle \subseteq \langle\tilde{\mathbf{c}}_{\tilde{m}},\tilde{\mathbf{E}}_{\tilde{m}}\rangle$ by \eqref{eqn:zm_reduction_conditions}. Thus, the representation \eqref{eqn:zm_realization} for the reduced mixture holds for $\mathbf{w}$. \hfill $\square$
\end{pf}

Admissible reductions can be constructed from elementary operations. An \emph{elementary merge} replaces two components $(\alpha_i, \langle \mathbf{c}_i, \mathbf{E}_i \rangle)$ and $(\alpha_j, \langle \mathbf{c}_j, \mathbf{E}_j \rangle)$ of a zonotopic mixture by the single component $(\alpha_i+\alpha_j, \langle \mathbf{c}_{ij}, \mathbf{E}_{ij} \rangle)$, where $\langle \mathbf{c}_{ij}, \mathbf{E}_{ij} \rangle \supseteq \langle \mathbf{c}_i, \mathbf{E}_i \rangle \cup \langle \mathbf{c}_j, \mathbf{E}_j \rangle$, leaving all other components unaltered. A constructive choice of the enclosing zonotope is given by \cite{girard2005reachability} as
\begin{equation}
\mathbf{c}_{ij}\hspace{-0.05cm}=\hspace{-0.05cm}\frac{\mathbf{c}_i\hspace{-0.06cm}+\hspace{-0.04cm}\mathbf{c}_j}{2},\hspace{-0.18cm}\quad \mathbf{E}_{ij}\hspace{-0.05cm}=\hspace{-0.05cm} \frac{1}{2}\hspace{-0.03cm}\left[\mathbf{E}_{i}\hspace{-0.05cm}+\hspace{-0.05cm}\mathbf{E}_{j}, \mathbf{E}_{i}\hspace{-0.07cm}-\hspace{-0.05cm}\mathbf{E}_{j}, \mathbf{c}_{i}\hspace{-0.07cm}-\hspace{-0.05cm}\mathbf{c}_{j} \right]\hspace{-0.02cm},
    \label{eqn:girard_merge}
\end{equation}
where the thinner generator matrix among $\mathbf{E}_i$ and $\mathbf{E}_j$ is padded with zero columns to match their dimensions. Repeated instantiation of elementary merges with the enclosure \eqref{eqn:girard_merge} shows that admissible reductions of any target size can be directly constructed. However, admissible reductions are non-unique, and repeated merging of distant zonotopes may yield overly conservative results. The remainder of this section addresses a selection of admissible reductions that minimize a suitable notion of information loss.

\vspace{-0.15cm}
\subsection{Greedy reduction algorithm}
\label{sec:loss}
\vspace{-0.15cm}
We quantify the information lost by an admissible reduction as the expected increase in the size of the component associated with the realized mode, i.e.,
\begin{equation}
    D:= \sum_{j=1}^{m}\sum_{i\in\mathcal{I}_j} \alpha_i\big[\rho(\langle \tilde{\mathbf{c}}_j, \tilde{\mathbf{E}}_j \rangle)-\rho(\langle \mathbf{c}_i, \mathbf{E}_i \rangle)\big],
    \label{eqn:reduction_loss}
\end{equation}
where $\rho(\cdot)$ is a size measure which in this paper is defined as being proportional to the mean width \cite{joos2023isoperimetric} of a zonotope weighted by $\mathbf{W}\succ \mathbf{0}$:
\begin{equation}
    \rho\big(\langle\mathbf{c},\mathbf{E}\rangle\big)
    := \sum_{k=1}^{p}\big\|\mathbf{e}_k\big\|_{\mathbf{W}},
    \label{eqn:rho_def}
\end{equation}
where $\mathbf{e}_k$ denotes the $k$th column of $\mathbf{E}\in\mathbb{R}^{n\times p}$. Although this choice is not unique in the proposed framework, it is motivated by it being strictly increasing with respect to proper set inclusion \cite{kabluchko2018monotonicity}, and computable in only $O(n^2 p)$ operations.

Minimizing $D$ jointly over partitions and enclosures is a combinatorial problem. Instead, a greedy scheme is implemented, which repeatedly performs the elementary merge incurring the smallest loss, in analogy with reduction schemes for Gaussian mixtures \cite{runnalls2007kullback}. Since all unaltered components cancel in the difference of weighted sums, an elementary merge of the components yields a closed form expression for a pairwise cost, presented in Lemma~\ref{lem:cost_closed_form}.

\begin{lemma}
\label{lem:cost_closed_form}
Consider an elementary merge of the components $(\alpha_i, \langle \mathbf{c}_i, \mathbf{E}_i \rangle)$ and $(\alpha_j, \langle \mathbf{c}_j, \mathbf{E}_j \rangle)$ into $(\alpha_i+\alpha_j, \langle \mathbf{c}_{ij}, \mathbf{E}_{ij} \rangle)$, with $\langle \mathbf{c}_{ij}, \mathbf{E}_{ij} \rangle$ given by \eqref{eqn:girard_merge}, where the thinner generator matrix is padded with zero columns so that both have $p$ columns. Then, the merge increases the loss \eqref{eqn:reduction_loss} by exactly
\begin{multline}
    C(i,j) \hspace{-0.04cm}=\hspace{-0.04cm} \frac{\alpha_i\hspace{-0.04cm}+\hspace{-0.04cm}\alpha_j}{2}
    \Bigg[\sum_{k=1}^{p}\hspace{-0.05cm}\Big(\|\mathbf{e}_{i,k}+\mathbf{e}_{j,k}\|_{\mathbf{W}}
      +\|\mathbf{e}_{i,k}-\mathbf{e}_{j,k}\|_{\mathbf{W}}\hspace{-0.03cm}\Big) \\
    +\hspace{-0.02cm}\|\mathbf{c}_i\hspace{-0.05cm}-\hspace{-0.04cm}\mathbf{c}_j\|_{\mathbf{W}}\hspace{-0.02cm}\Bigg]
    \hspace{-0.08cm}-\hspace{-0.05cm}\alpha_i\hspace{-0.03cm}\sum_{k=1}^{p}\|\mathbf{e}_{i,k}\|_{\mathbf{W}}
    \hspace{-0.04cm}-\hspace{-0.04cm}\alpha_j\sum_{k=1}^{p}\|\mathbf{e}_{j,k}\|_{\mathbf{W}},
    \label{eqn:cost_closed_form}
\end{multline}
where $\mathbf{e}_{i,k}$ and $\mathbf{e}_{j,k}$ denote the $k$th columns of $\mathbf{E}_i$ and $\mathbf{E}_j$, respectively. Furthermore, $C(i,j)\geq 0$ for every pair of components, with $C(i,j)=0$ if and only if $\langle \mathbf{c}_{ij}, \mathbf{E}_{ij} \rangle=\langle \mathbf{c}_i, \mathbf{E}_i \rangle=\langle \mathbf{c}_j, \mathbf{E}_j \rangle$.
\end{lemma}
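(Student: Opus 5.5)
The plan is to write the loss change of a single elementary merge as a weighted difference of size measures, evaluate the size of the merged zonotope in closed form, and then obtain the sign and equality conditions from monotonicity of $\rho$ under set inclusion.

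\emph{Step 1: isolate the two merged components.} In \eqref{eqn:reduction_loss}, an elementary merge changes only the terms indexed by $i$ and $j$. Before the merge these components sit in their own cells with no excess, so they contribute zero. After the merge they contribute $\alpha_i[\rho(\zon{\mathbf{c}_{ij}}{\mathbf{E}_{ij}})-\rho(\zon{\mathbf{c}_i}{\mathbf{E}_i})]+\alpha_j[\rho(\zon{\mathbf{c}_{ij}}{\mathbf{E}_{ij}})-\rho(\zon{\mathbf{c}_j}{\mathbf{E}_j})]$. Hence
\[
C(i,j)=(\alpha_i+\alpha_j)\rho(\zon{\mathbf{c}_{ij}}{\mathbf{E}_{ij}})-\alpha_i\rho(\zon{\mathbf{c}_i}{\mathbf{E}_i})-\alpha_j\rho(\zon{\mathbf{c}_j}{\mathbf{E}_j}).
\]
If the merge is applied to an already merged component, the same bookkeeping holds with $\alpha$ replaced by the aggregated weights.

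\emph{Step 2: closed form.} Insert the columns of $\mathbf{E}_{ij}$ from \eqref{eqn:girard_merge} into \eqref{eqn:rho_def}. These columns are $\tfrac12(\mathbf{e}_{i,k}+\mathbf{e}_{j,k})$, $\tfrac12(\mathbf{e}_{i,k}-\mathbf{e}_{j,k})$ for $k=1,\dots,p$, and $\tfrac12(\mathbf{c}_i-\mathbf{c}_j)$. Zero padding adds nothing to $\rho$. Absolute homogeneity of $\|\cdot\|_{\mathbf{W}}$ then gives $\rho(\zon{\mathbf{c}_{ij}}{\mathbf{E}_{ij}})$ equal to one half of the bracket in \eqref{eqn:cost_closed_form}, which yields the formula directly.

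\emph{Step 3: nonnegativity, first route.} Here I would use that $\rho$ is a function of the set and not only of its generator representation. For a zonotope, $\rho$ is proportional to the mean width of its image under $\mathbf{W}^{1/2}$. Indeed, the image has generators $\mathbf{W}^{1/2}\mathbf{e}_k$, and the mean width of a zonotope is a fixed dimensional constant times the sum of the Euclidean lengths of its generators. Since \eqref{eqn:girard_merge} encloses both sets, monotonicity of the mean width under inclusion gives $\rho(\zon{\mathbf{c}_{ij}}{\mathbf{E}_{ij}})\ge\rho(\zon{\mathbf{c}_i}{\mathbf{E}_i})$ and $\rho(\zon{\mathbf{c}_{ij}}{\mathbf{E}_{ij}})\ge\rho(\zon{\mathbf{c}_j}{\mathbf{E}_j})$. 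With $\alpha_i,\alpha_j>0$, the expression from Step 1 is a positive combination of two nonnegative terms, so $C(i,j)\ge0$.

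\emph{Step 4: equality case.} By strict monotonicity under proper inclusion \cite{kabluchko2018monotonicity}, and since both weights are positive, $C(i,j)=0$ holds exactly when both inclusions are equalities, i.e.\ $\zon{\mathbf{c}_{ij}}{\mathbf{E}_{ij}}=\zon{\mathbf{c}_i}{\mathbf{E}_i}=\zon{\mathbf{c}_j}{\mathbf{E}_j}$. The converse direction uses the set-function property: equal sets have equal $\rho$, and then the expression in Step 1 vanishes.

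\emph{Main obstacle and fallback.} The delicate point is justifying that $\rho$ is a genuine set function, proportional to a transformed mean width, together with its strict monotonicity. Without this, the ``if'' direction could fail for different generator representations of the same set. If I wanted to avoid the mean-width argument for nonnegativity, I would argue columnwise instead. The triangle inequality gives $\|\mathbf{a}+\mathbf{b}\|+\|\mathbf{a}-\mathbf{b}\|\ge2\max(\|\mathbf{a}\|,\|\mathbf{b}\|)\ge\tfrac{2(\alpha_i\|\mathbf{a}\|+\alpha_j\|\mathbf{b}\|)}{\alpha_i+\alpha_j}$. Summing over $k$ and adding $\|\mathbf{c}_i-\mathbf{c}_j\|_{\mathbf{W}}\ge0$ proves $C\ge0$. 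Strict convexity of the $\mathbf{W}$-norm then forces $\mathbf{e}_{j,k}=\pm\mathbf{e}_{i,k}$ and $\mathbf{c}_i=\mathbf{c}_j$ in the equality case, which gives the stated equality of zonotopes.
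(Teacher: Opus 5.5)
Your proposal is correct and follows essentially the same route as the paper. Like the paper, you isolate the two affected terms of \eqref{eqn:reduction_loss} (all other terms cancel), expand $\rho$ on the columns of \eqref{eqn:girard_merge}, and obtain nonnegativity and the equality case from monotonicity of $\rho$ under inclusion and its strict version \cite{kabluchko2018monotonicity}. The following points you add go beyond what the paper leaves implicit, and they are all valid: that $\rho$ is a constant multiple of the mean width of the $\mathbf{W}^{1/2}$-image, hence independent of the generator representation, which the ``if'' direction genuinely needs; that the same bookkeeping holds with aggregated weights; and your elementary triangle-inequality and strict-convexity fallback, which gives $C(i,j)\ge 0$ and $C(i,j)=0\Rightarrow \mathbf{c}_i=\mathbf{c}_j,\ \mathbf{e}_{j,k}=\pm\mathbf{e}_{i,k}$ without the citation.
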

\begin{pf}
Since all unaltered components cancel in \eqref{eqn:reduction_loss}, the increase in cost equals
\begin{multline}
        C(i,j)=(\alpha_i+\alpha_j)\rho(\langle \mathbf{c}_{ij}, \mathbf{E}_{ij} \rangle) \\
        -\alpha_i \rho(\langle \mathbf{c}_i, \mathbf{E}_i \rangle) - \alpha_j \rho(\langle \mathbf{c}_j, \mathbf{E}_j \rangle),
        \label{eqn:pairwise_cost}
\end{multline}
and \eqref{eqn:cost_closed_form} follows by applying \eqref{eqn:rho_def} to the generator matrix in \eqref{eqn:girard_merge}. For the second claim, rewrite \eqref{eqn:pairwise_cost} as
\begin{multline*}
    C(i,j)=\alpha_i\big[\rho(\langle \mathbf{c}_{ij}, \mathbf{E}_{ij} \rangle)-\rho(\langle \mathbf{c}_i, \mathbf{E}_i \rangle)\big]\\
    +\alpha_j\big[\rho(\langle \mathbf{c}_{ij}, \mathbf{E}_{ij} \rangle)-\rho(\langle \mathbf{c}_j, \mathbf{E}_j \rangle)\big].
\end{multline*}
Since $\langle \mathbf{c}_{ij}, \mathbf{E}_{ij} \rangle$ contains both $\langle \mathbf{c}_i, \mathbf{E}_i \rangle$ and $\langle \mathbf{c}_j, \mathbf{E}_j \rangle$, both bracketed terms are nonnegative by the monotonicity of $\rho$ under set inclusion \cite{kabluchko2018monotonicity}, and hence $C(i,j)\geq 0$. Moreover, as $\alpha_i,\alpha_j>0$, $C(i,j)=0$ holds if and only if both bracketed terms vanish, which by the equality case of Lemma~2.1 of \cite{kabluchko2018monotonicity} occurs if and only if $\langle \mathbf{c}_{ij}, \mathbf{E}_{ij} \rangle=\langle \mathbf{c}_i, \mathbf{E}_i \rangle=\langle \mathbf{c}_j, \mathbf{E}_j \rangle$. \hfill $\square$
\end{pf}

In the ZMF, Algorithm~\ref{alg:zm_reduction} is applied after each measurement update, with each merged component carrying the sum of the weights of its constituents and the test \eqref{eq:ctest} applied once per merged component. The generator order reduction in Step~7 exploits $\zon{\mathbf{c}}{\mathbf{E}}\subseteq\zon{\mathbf{c}}{\redu{q,\mathbf{W}}\mathbf{E}}$ to keep the generator matrices tractable. Since the merged components enclose their constituents and the ZMF updates preserve set inclusion, Corollaries~\ref{cor:prob} and~\ref{cor:reduced_prob} remain valid at every time step with the merged weights, zonotopes, and survivor set in place of the ones given by Theorem~\ref{thm:zmf}. 

\begin{algorithm}[t]
\caption{Greedy zonotopic mixture reduction}
\label{alg:zm_reduction}
\begin{algorithmic}[1]
\Statex \textbf{Input:} Mixture $\{(\alpha_i,\langle\mathbf{c}_i,
  \mathbf{E}_i\rangle)\}_{i=1}^{M}$ with $\alpha_i>0$, target size $m$, generator
  budget $q$, weighting $\mathbf{W}\succ \mathbf{0}$.
\State $n \gets M$
\While{$n > m$}
  \State Compute $C(i,j)$ in \eqref{eqn:cost_closed_form} for $i<j\leq n$
  \State $(i^\star,j^\star) \gets \arg\min_{i<j\leq n} C(i,j)$
  \State Merge $\langle\mathbf{c}_{i^\star},\mathbf{E}_{i^\star}\rangle$ and $\langle\mathbf{c}_{j^\star},\mathbf{E}_{j^\star}\rangle$ via \eqref{eqn:girard_merge}
  \If{$\mathbf{E}_{i^\star j^\star}$ has more than $q$ generators}
    \State Reduce the generators of $\mathbf{E}_{i^\star j^\star}$ via \eqref{eq:reducedgenerator}
  \EndIf
  \State $n \gets n-1$
\EndWhile
\Statex \textbf{Output:} Reduced mixture $\{(\tilde{\alpha}_j,\langle
  \tilde{\mathbf{c}}_j,\tilde{\mathbf{E}}_j\rangle)\}_{j=1}^{m}$.
\end{algorithmic}
\end{algorithm}

\begin{rem}
The cost \eqref{eqn:cost_closed_form} of absorbing a component of negligible weight $\alpha_i$ into a nearby component $j$ is approximately $\alpha_j[\rho(\langle \mathbf{c}_{ij}, \mathbf{E}_{ij} \rangle)-\rho(\langle \mathbf{c}_j, \mathbf{E}_j \rangle)]$, which is small whenever the absorbed component lies near $\langle \mathbf{c}_j, \mathbf{E}_j \rangle$. The greedy scheme therefore eliminates low-weight components preferentially, emulating the pruning step of Gaussian sum filters \cite{crouse2011look}.
\end{rem}

\section{Simulation studies} 
\label{sec:simulation}

Two examples are presented to showcase the properties and effectiveness of the ZMF. The first is a fourth order system in which we examine the falsification test \eqref{eq:ctest}, the enclosure probability property of Corollary~\ref{cor:prob}, and performance against the ZKF and GSF. The second is a three area load frequency control benchmark, on which pruning and the ZMR scheme of Algorithm \ref{alg:zm_reduction} are compared over a Monte Carlo simulation.
	
\subsection{Example 1: Fourth-order system}\label{sec:ex1}

Consider the system \eqref{eqn:system} with $n_x=4$, $n_u=2$, $n_y=2$ and block diagonal matrices $\mathbf{A}=\textnormal{diag}\{ \mathbf{A}_a,\mathbf{A}_b\}$,  $\mathbf{B}=\textnormal{diag}\{ \mathbf{B}_a,\mathbf{B}_b\}$, $\mathbf{C}=\textnormal{diag}\{ \mathbf{C}_a,\mathbf{C}_b\}$, and $\mathbf{D}=\textnormal{diag}\{1, 0.5\}$, with
\begin{alignat}{3}
    \mathbf{A}_a &\!=\!\!\begin{bmatrix} 0.950 & 0.100\\ -0.080 & 0.900\end{bmatrix}\!,~
   &\mathbf{B}_a &\!=\!\!\begin{bmatrix} 1.200\\ 1.100\end{bmatrix},~
   &\mathbf{C}_a^{\top} &\!=\!\!\begin{bmatrix} 1.000 \\ 0.000\end{bmatrix}, \notag\\
    \mathbf{A}_b &\!=\!\!\begin{bmatrix} 0.052 & -0.295\\ 0.295 & 0.052\end{bmatrix}\!,~
   &\mathbf{B}_b &\!=\!\!\begin{bmatrix} 0.111\\ -0.011\end{bmatrix},~
   &\mathbf{C}_b^{\top} &\!=\!\!\begin{bmatrix} 0.848 \\ -0.530\end{bmatrix}. \notag
\end{alignat}
The process noise is a zonotopic mixture of $W=10$ modes, built as the product of a $2$ mode mixture on the upper block and a $5$ mode mixture on the lower one. The upper factor has weights $[\alpha^a_1,\alpha^a_2]=[0.6,0.4]$, centers $\bm{\mu}^a_1\hspace{-0.04cm}=\hspace{-0.04cm}-\bm{\mu}^a_2=[1.2,1.1]^\top$, and generators $\mathbf{Q}^a_1\hspace{-0.04cm}=\hspace{-0.04cm}\mathbf{Q}^a_2\hspace{-0.04cm}=\hspace{-0.04cm}\operatorname{diag}\left\lbrace 0.05,0.05\right\rbrace $. The lower factor has weights, centers and generator matrices
\begin{align}
    [\alpha^b_1,\dots,\alpha^b_5] &= [0.50,0.23,0.14,0.08,0.05], \notag\\
    [\bm{\mu}^b_1,\dots,\bm{\mu}^b_5] &= \begin{bmatrix}
    0 & -0.331 & -0.635 & -1.432 & 2.057\\
    0 & 0.678 & -2.375 & -1.405 & 1.839\end{bmatrix}, \notag
\end{align}
\vspace{-0.9cm}
\begin{align}
    \mathbf{Q}^b_1 \!&=\!\!\begin{bmatrix} .117 & .090 & -.010\\ -.044 & .074 & .088\end{bmatrix}\!, &
    \mathbf{Q}^b_2 \!&=\!\!\begin{bmatrix} .149 & -.020 & .002\\ .015 & .134 & -.135\end{bmatrix}\!, \notag\\
    \mathbf{Q}^b_3 \!&=\!\!\begin{bmatrix} .234 & .215 & .175\\ -.257 & .215 & .071\end{bmatrix}\!, &
    \mathbf{Q}^b_4 \!&=\!\!\begin{bmatrix} .343 & .174 & .063\\ .101 & -.275 & .332\end{bmatrix}\!, \notag\\
    \mathbf{Q}^b_5 \!&=\!\!\begin{bmatrix} .431 & .047 & .354\\ -.087 & .446 & .116\end{bmatrix}\!. \notag
\end{align}
 The mode $(i,j)$ of the joint mixture has weight $\alpha^a_i\alpha^b_j$, center $[\bm{\mu}_i^{a\top},\bm{\mu}_j^{b\top}]^\top$ and generator $\operatorname{diag}\left\lbrace \mathbf{Q}^a_i,\mathbf{Q}^b_j \right\rbrace$. 

The output noise is built in the same way and has $V=4$ modes. Its upper factor has weights $[\eta_1^a,\eta_2^a]=[0.7,0.3]$, centers $\rho^a_1=-\rho^a_2=1.5$, and generators $R^a_1=R^a_2=0.1$, while its lower factor has weights $[\eta_1^b,\eta_2^b]=[0.6,0.4]$, centers $\rho^b_1=-\rho^b_2=0.12$, and generators $R^b_1=R^b_2=0.33$. The initial state zonotope is centered at the origin with generator matrix $\operatorname{diag}\left\lbrace \mathbf{E}_a,\mathbf{E}_b \right\rbrace $, where $\mathbf{E}_a=\operatorname{diag}\left\lbrace 0.3,0.3\right\rbrace$ and $\mathbf{E}_b=[0.297,\,0.186;\,-0.186,\,0.297]$. The implementation uses the greedy reduction of Algorithm~\ref{alg:zm_reduction} with target reduction size of $m=700$ and generator budget $q=20$.

We first study the number of surviving histories after the falsification step \eqref{eq:ctest} and ZMR. Figure \ref{fig:Example1Branches} shows that falsification via \eqref{eq:ctest} significantly trims the surviving branch count, even before ZMR is applied. Exhaustive mode enumeration exceeds $10^{32}$ branches after $t=20$, while the target reduction size $m$ caps it at $700$, and falsification via \eqref{eq:ctest} drives it below the upper limit whenever the data rules out histories, reaching a minimum of $289$ branches at $t=8$.

\begin{figure}[t]
    \centering
    \includegraphics[width=\columnwidth]{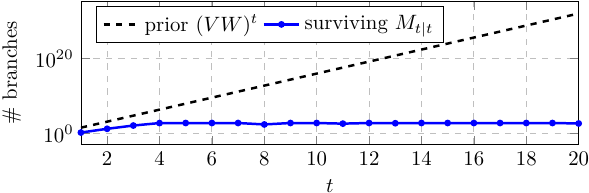}
    \vspace{-0.6cm}
    \caption{Example 1. Surviving branches $M_{t|t}$ of the posterior zonotopic mixture against the number of mode histories $(VW)^t=40^t$ that an exhaustive enumeration considers.}
    \label{fig:Example1Branches}
\end{figure}

Figure~\ref{fig:Example1Enclosure} draws the posterior mixture in the $(x_1,x_2)$ plane at six instants of the run, together with the true trajectory and a point state estimate, obtained as the weighted average center of is associated zonotopic mixture. The $4$ candidate combinations of the process and output mode of this block are separated enough for \eqref{eq:ctest} to reject all of them but the realized one, so the posterior holds a single zonotope of unit weight and the ZMF reduces to the ZKF of Lemma~\ref{lem:zkf} running on the true mode sequence. Since only one zonotope describes the state membership, ZMR is not required and the enclosure of Theorem~\ref{thm:zmf} holds without the additional conservatism of Corollary~\ref{cor:reduced_prob}.

\begin{figure}[t]
    \centering
    \includegraphics[width=\columnwidth]{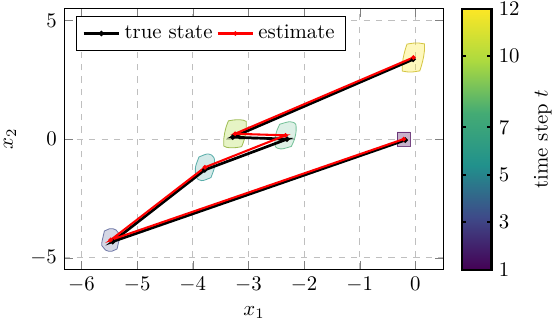}
    \vspace{-0.7cm}
    \caption{Example 1. Posterior mixture in the $(x_1,x_2)$ plane at $t\in\{1,3,5,7,10,12\}$, colored by time, with the true trajectory and its center point estimate. At every $t$ the mixture holds only one component, and the true state is inside it.}
    \label{fig:Example1Enclosure}
\end{figure}

We now compare the estimates of $x_1$ and $x_2$ over time obtained with the ZMF, the ZKF, and the GSF. The ZKF runs on the single zonotope enclosing each noise mixture by repeated application of \eqref{eqn:girard_merge}, while the GSF is computed with a Gaussian mixture model with the same weights and means as the zonotopic mixture noise, and covariances matching the corresponding covariation matrices. Figure~\ref{fig:Example1StateEstimation} shows the guaranteed bounds of the ZMF and ZKF, with their point estimates as lines. Both enclosures contain the true state at all instants by construction, but the ZKF band is $13.2$ and $5.7$ times wider than the ZMF band on $x_1$ and $x_2$. The ZKF cannot discard falsified mode histories; in contrast, the ZMF has identified the realized history and carries only its generators. The GSF follows the point estimate of the ZMF, with root mean square errors of $0.032$ and $0.105$ against $0.030$ and $0.104$, but it reports a density of unbounded support instead of a set, so it certifies no enclosure at all.

\begin{figure}[t]
    \centering
    \includegraphics[width=\columnwidth]{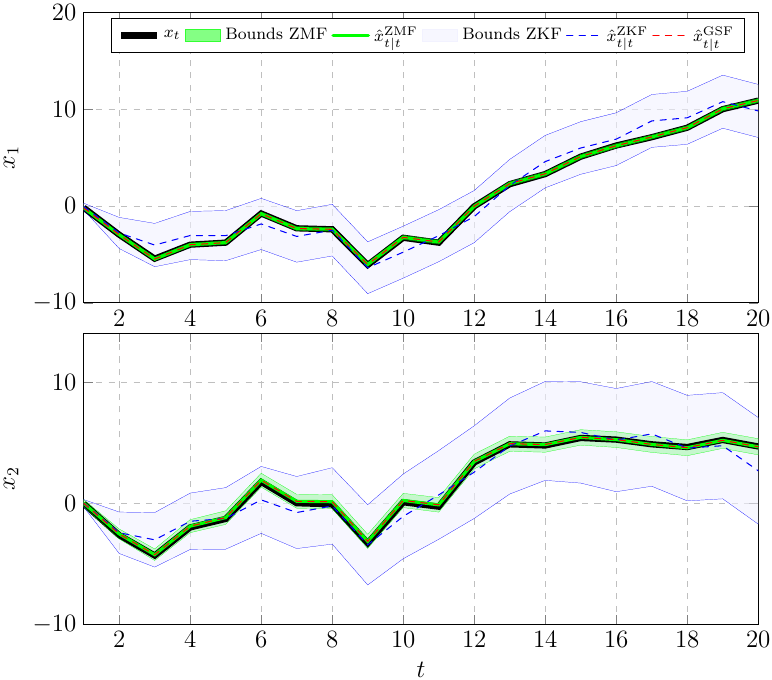}
    \vspace{-0.6cm}
    \caption{Example 1. True state, point estimates and guaranteed bounds of two states. Both the ZMF and the ZKF contain the truth state, and the ZKF enclosure, built on a single zonotope per noise, is up to an order of magnitude wider.}
    \label{fig:Example1StateEstimation}
\end{figure}

In agreement with Corollary~\ref{cor:prob}, the ZMF allows for selecting subsets of mode histories with a guaranteed enclosure probability. Figure~\ref{fig:Example1ProbLine} plots, at each instant, the area of the sub-union of the heaviest components whose weights add up to a guaranteed probability $\beta$, normalized by the area of the full union. This is the trade certified by Corollary~\ref{cor:prob} at $t=2$ and $t=3$, and by Corollary~\ref{cor:reduced_prob} from $t=4$ onward, where the reduction has merged components. The areas are computed by integrating over the boundary of the union, so the overlapping zonotopic regions are automatically discounted. Only the accumulated weight values are attainable values of $\beta$, so the reported area is piecewise constant in $\beta$ whose steps are as wide as the weight of the component that enters the mixture. The area-enclosure probability tradeoff is accentuated close to $\beta=1$, where reducing a small part of the probability guarantee implies a significant reduction of the reported area.

\begin{figure}[t]
    \centering
    \includegraphics[width=\columnwidth]{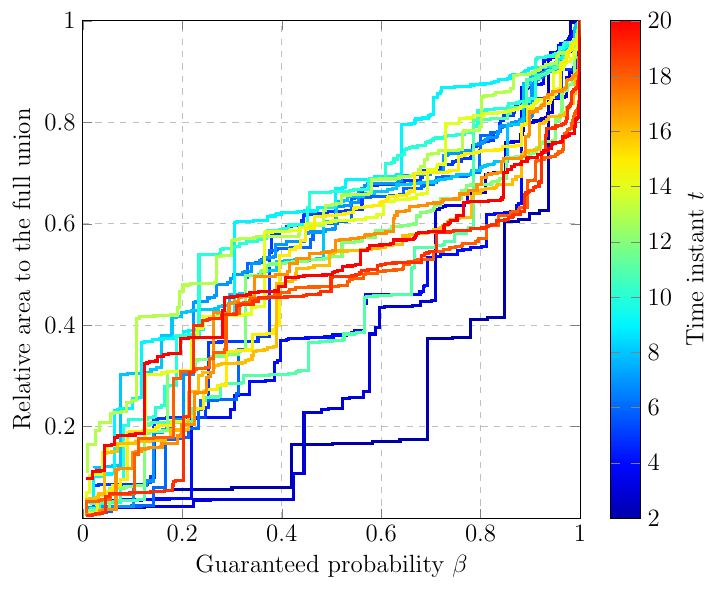}
    \vspace{-0.5cm}
    \caption{Example 1. Area of the sub-union of heaviest components of the upper block with guaranteed probability $\beta$, relative to the area of the full union, for $t=2$ to $t=20$. Color encodes the instant.}
    \label{fig:Example1ProbLine}
\end{figure}

\begin{figure*}[t]
    \centering
    \includegraphics[width=\textwidth]{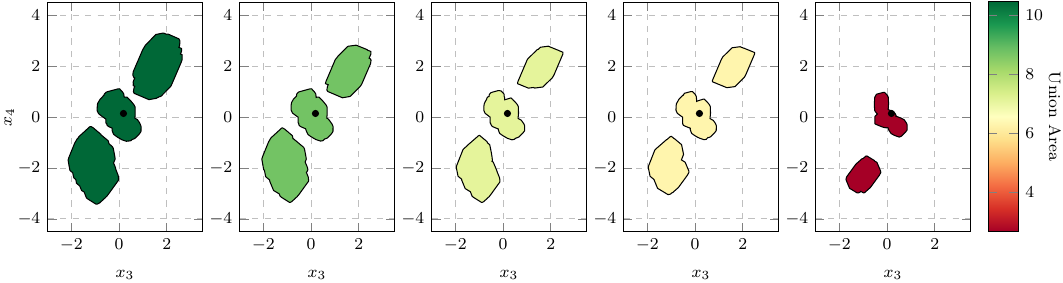}
    \vspace{-0.7cm}
    \caption{Example 1. Sub-unions of the upper block at $t=6$ for guaranteed probabilities $\beta=1.00$, $0.90$, $0.70$, $0.50$ and $0.30$, built respectively from $700$, $448$, $240$, $123$ and $51$ of the $700$ surviving components. Color encodes the area of each sub-union and the dot marks the true state.}
    \label{fig:Example1SetStar}
\end{figure*}
Figure~\ref{fig:Example1SetStar} shows one mixture enclosure at $t=6$, where it holds its $700$ components. From left to right the panels lower $\beta$ from $1.00$ to $0.30$, with enclosure areas corresponding to $100$, $84$, $68$, $60$ and $25$ percent of the full union. Already at full guaranteed probability the reported region is disconnected, which the ZKF cannot deliver. The three disconnected regions survive down to $\beta=0.50$, and giving up the next twenty points of confidence removes the region in the upper right corner, leaving a quarter of the original area. In this realization the true state stays inside the central region at all five levels, although only the first is guaranteed to contain the true state.

\subsection{Example 2: three-area load-frequency control}
\label{sec:ex2}

The second example is a three-area load-frequency control benchmark of interconnected power systems \cite{elgerd1970optimum,kundur1994power}. Each area contains an identical non-reheat thermal unit with states $\Delta f_i,\Delta P_{g,i}$, and $\Delta X_{g,i}$ (frequency deviation, generated power, and governor valve position), with power system gain $K_p=120$ Hz/pu, droop $R=2.4$ Hz/pu and governor, turbine, and power system time constants $T_g=0.08$ s, $T_t=0.3$ s, and $T_p=20$ s. The areas are interconnected radially by two tie lines of synchronizing coefficient $2\pi T_{ij}=0.545$ pu/rad, whose power deviations $\Delta P_{\textnormal{tie},12}$ and $\Delta P_{\textnormal{tie},23}$ complete the state vector of $n_x=11$. The input has dimension $n_u=6$, consisting of three governor setpoints and the three scheduled load changes, and the output has dimension $n_y=5$, consisting of three area frequencies and two tie line flows. The continuous time model is discretized by a zero-order hold at $T_s=0.25$ s and excited by a staircase input over $N=40$ samples.

The noise follows Definition~\ref{def:zonotopic_mixture}, with the mode histories representing fault sequences. The process noise has $W=7$ modes: nominal operation, with $\bm{\mu}_1=\mathbf{0}$ and $\alpha_1=0.70$, and six unmeasured load-altering events, one per area at either $0.045$ pu or $0.090$ pu, each with weight $0.05$. All modes share a generator matrix bounding the residual load fluctuation at $0.0023$ pu per area. The output noise has $V=3$ modes: a healthy sensor set, with $\bm{\rho}_1=\mathbf{0}$ and $\eta_1=0.7$; a $0.15$ Hz bias on the frequency meter of Area~1 with $\eta_2=0.18$; and a $0.06$ pu bias on the flow meter of the first tie line, with $\eta_3=0.12$. All outputs share a bounded sensor noise of $0.0046$ Hz on the frequency channels and $0.0023$ pu on the tie line channels.

The initial state is contained in a zonotope centered at the origin, whose generators give half-widths of $0.03$ in the area states and $0.015$ in the tie line states. The generator budget is $q=200$ and the target reduction size is $m=6$, which is deliberately smaller than the number of process modes. Every time update creates $WM_{t|t}$ branches with no data available to falsify them, so the predicted mixture exceeds the budget and the reduction is exercised at every step of the recursion, no matter how many branches survive the measurement update. Four estimators are compared over $100$ Monte Carlo realizations. The first two are the ZMF reduced by the ZMR scheme of Algorithm~\ref{alg:zm_reduction} and by weight pruning, which retains the $6$ heaviest components and discards the rest. The third is the GSF induced by \eqref{gmmnoise} with covariances $\mathbf{Q}_i\mathbf{Q}_i^\top$ and $\mathbf{R}_j\mathbf{R}_j^\top$ under the same component budget, whose confidence band is taken as the $3\sigma$ box of the posterior mixture. The fourth is the ZKF of \cite{combastel2015zonotopes} run on the single zonotope that encloses each noise mixture.

Figure~\ref{fig:Example2Bands} reports the guaranteed bounds of the two reductions on one realization, for the frequency of Area~1 and for the flow of the first tie line. The bounds resulting from ZMR using Algorithm \ref{alg:zm_reduction} contain the true trajectory at every instant. The pruning bounds are similar to the greedy algorithm before $t=8$, but at $t=8$ pruning discards the branch carrying the realized mode history, which cannot be recovered. The subsequent data falsifies every surviving history, and no estimator can be reported from then on. In sharp contrast, reduction via merging replaces a group of components by a mixture that contains all of them, so no probability mass is lost and the guarantee of Corollary~\ref{cor:reduced_prob} is inherited by the reduced mixture. 

\begin{figure}[t]
    \centering
    \includegraphics[width=\columnwidth]{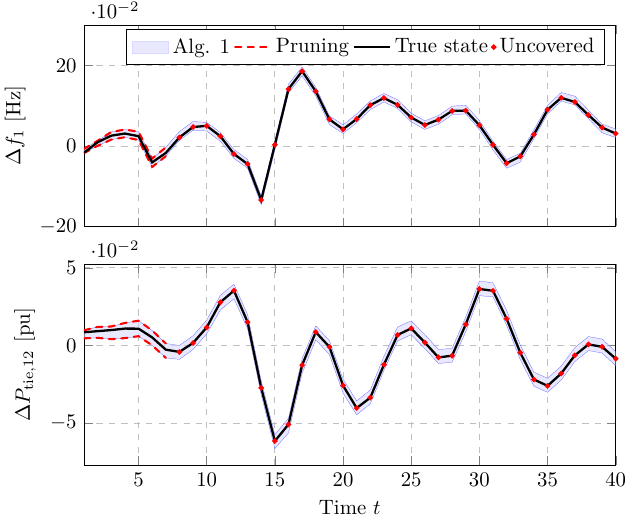}
    \vspace{-0.6cm}
    \caption{Example 2. Guaranteed bounds of the ZMF on one realization, for the frequency of area~1 and the flow of the first tie line. The greedy reduction encloses the true trajectory throughout. The pruning reduction is tighter and leaves the truth uncovered at the marked instants.}
    \label{fig:Example2Bands}
\end{figure}

\begin{table*}[t]
    \centering
    \footnotesize
    \setlength{\tabcolsep}{5pt}
    \caption{Example 2. Monte Carlo summary over $100$ realizations, with standard deviations in parentheses. Coverage is the fraction of instants where the true state lies in the reported enclosure, non-empty is the fraction of instants where the posterior mixture retains at least one component, $\bar{\rho}$ is the expected component size of \eqref{eqn:rho_def}, and $\bar{w}_{\beta}$ is the mean width on the frequency states of the sub-union of heaviest components of the corrected mixture with guaranteed probability $\beta$.}
    \label{tab:Example2}
    \begin{tabular}{lccccccc}
\hline
Filter & RMSE $\Delta f$ [Hz] & Coverage [\%] & Non-empty [\%] & $\bar{\rho}$ & $\bar{w}_{1}$ [Hz] & $\bar{w}_{0.9}$ [Hz] & Time [s] \\
\hline
ZMF, Alg. \ref{alg:zm_reduction} & 0.0018 (0.0001) & 100.0 & 100.0 & 0.036 & 0.0214 & 0.0214 & 8.119 \\
ZMF, pruning & 0.0020 (0.0004) & 47.1 & 47.4 & 0.051 & 0.0197 & 0.0197 & 2.989 \\
GSF & 0.0050 (0.0018) & 82.0 & 100.0 & 0.059 & 0.0187 & 0.0186 & 0.054 \\
ZKF & 0.0255 (0.0020) & 100.0 & 100.0 & 0.368 & 0.1839 & 0.1839 & 0.031 \\
\hline
\end{tabular}

\end{table*}

Table~\ref{tab:Example2} summarizes the Monte Carlo simulation. The ZMF with greedy ZMR attains the lowest point error, $0.0018$ Hz against the $0.0050$ Hz of the GSF and the $0.0255$ Hz of the ZKF, and achieves it with full coverage. The GSF replaces each bounded mode by a Gaussian of matching covariance, so it spreads weight over histories that the consistency test \eqref{eq:ctest} rejects, and its $3\sigma$ band contains the true state in only $82$ percent of the instants against $100$ percent of the ZMF. The ZKF also achieves full coverage, and at $0.031$ s per run against the $8.12$ s of the greedy ZMF, is by far the fastest, but its expected component size $\bar{\rho}$ is ten times that of the ZMF and it attaches no probability to any part of its enclosure. Lastly, pruning is the cheapest reduction method, although it cannot guarantee enclosure.

\vspace{-0.2cm}
\section{Conclusions}
\label{sec:conclusions}
\vspace{-0.2cm}
This paper has proposed zonotopic mixture models, in which the noise is generated by drawing a zonotope from a finite collection according to fixed probabilities and then realizing an arbitrary element of it. For state estimation, this noise description has led to the zonotopic mixture filter, which admits an interpretation as a bank of zonotopic Kalman filters over mode histories, with posterior probabilities computed from the data. Guaranteed coverage probabilities follow directly from the recursions, and a greedy mixture reduction scheme was proposed that retains these guarantees, keeping the representation computationally feasible. Two numerical examples illustrated the benefits of the approach: the consistency test discards falsified histories and reduces the branch count by orders of magnitude, and zonotope unions with predefined enclosure probabilities can be computed, even with merging-based reduction.

Fault detection and diagnosis is a natural application of this framework, since each mixture mode can be interpreted as a fault and the posterior weights quantify the evidence for each fault sequence. Future work concerns the derivation of a zonotopic mixture smoother, extensions to nonlinear systems and simultaneous input and state estimation, and applications of the ZMF within control and system identification.

\vspace{-0.2cm}
\section*{Appendix A: Proof of Lemma \ref{lem:pmf}}
\label{appendix:pmf}
\vspace{-0.2cm}
\begin{pf}
    The initial law $P(\mathbf{x}_1|\mathbf{y}_{1:0})$ follows directly from $P(\mathbf{x}_1)$. Proceeding by induction, assume that \eqref{pmf_prediction} holds for all time instants up to $t-1$. Then, the Bayesian measurement update recursion \cite{sarkka2013bayesian} yields
    \begin{align}
    P&(\mathbf{x}_t|\mathbf{y}_{1:t}) \notag \\
    &\hspace{-0.2cm}\propto P(\mathbf{y}_t|\mathbf{x}_t)P(\mathbf{x}_t|\mathbf{y}_{1:t-1}) \notag \\
    &\hspace{-0.2cm}\propto \sum_{j=1}^V \sum_{k=1}^{M_{t|t-1}} \eta_j \gamma_{t|t-1}^k \delta(\mathbf{y}_t, \mathbf{Cx}_{t} + \mathbf{Du}_t \hspace{-0.07cm}+\hspace{-0.07cm} \bm{\rho}_j\hspace{-0.01cm}) \delta(\mathbf{x}_{t},\hat{\mathbf{x}}_{t|t-1}^k) \notag \\
    &\hspace{-0.2cm}= \sum_{j=1}^V \sum_{k=1}^{M_{t|t-1}} \eta_j \gamma_{t|t-1}^k \delta(\bm{\nu}_t^\ell,\mathbf{0}) \delta(\mathbf{x}_{t},\hat{\mathbf{x}}_{t|t-1}^k). \notag
    \end{align}
    Defining the index $\ell$ as in the lemma statement and normalizing the PMF yields \eqref{eqn:filtering_pmf}. On the other hand, the time update equations follow from
    \begin{align}
        P(\mathbf{x}_{t+1}|\mathbf{y}_{1:t}) \hspace{-0.05cm} = \hspace{-0.05cm} \sum_{j=1}^V \sum_{\ell=1}^{M_{t|t}} \alpha_i \gamma_{t|t}^\ell \delta(\mathbf{x}_{t+1},\mathbf{A}\hat{\mathbf{x}}_{t|t}^\ell+\mathbf{Bu}_t +\bm{\mu}_i), \notag 
    \end{align}
    where we have used the fact that $\sum_{\mathbf{x}_t\in \mathcal{X}_t}\delta(\mathbf{x}_t,\hat{\mathbf{x}}_{t|t}^\ell) = 1$. Defining the new index $k$ as in the lemma statement yields \eqref{pmf_prediction}, concluding the proof.   
\end{pf}

\section*{Appendix B: Proof of Theorem \ref{thm:filters-dbf}}
\begin{pf}
\emph{Part (i)}. We proceed by induction, starting with the proportionality of the generator matrices. As the ZMF initializes with generator $\varepsilon \mathbf{E}_1$, we only need to show that $\mathbf{E}_{t|t}^\ell$ and $\mathbf{E}_{t+1|t}^k$ are proportional to $\varepsilon$ given that $\mathbf{E}_{t|t-1}^k=\varepsilon \tilde{\mathbf{E}}_{t|t-1}^k$, with $\tilde{\mathbf{E}}_{t|t-1}^k$ independent of $\varepsilon$. Defining $\tilde{\mathbf{P}}_{t|t-1}^k:= \tilde{\mathbf{E}}_{t|t-1}^k\tilde{\mathbf{E}}_{t|t-1}^{k\top}$ and replacing it in \eqref{optimalgainzmf}, the common factor $\varepsilon^2$ cancels:
\begin{equation}
    \mathbf{L}_t^\ell = \tilde{\mathbf{P}}_{t|t-1}^k \mathbf{C}^\top(\mathbf{C} \tilde{\mathbf{P}}_{t|t-1}^k\mathbf{C}^\top + \mathbf{R}_j \mathbf{R}_j^\top)^{-1} \notag,
\end{equation}
so $\mathbf{L}_t^\ell$ is independent of $\varepsilon$. Thus, $\mathbf{E}_{t|t}^\ell = \varepsilon [(\mathbf{I}-\mathbf{L}^\ell_t \mathbf{C})\tilde{\mathbf{E}}^k_{t|t-1}, \mathbf{L}^\ell_t\mathbf{R}_j]$, and since the process zonotope generators $\varepsilon \mathbf{Q}_i$ are also proportional to $\varepsilon$, the same follows for $\mathbf{E}_{t+1|t}^k$ from \eqref{generator_timeupdate}.

Define $\bar r:=\max\{\|[\mathbf{C}\tilde{\mathbf{E}}^{k}_{t|t-1},\mathbf{R}_j]\|_{2,1}\colon \ell \leq M_{t|t},t\leq N\}$ and $\underline{\nu}=\min\{\|\bm{\nu}^{\ell,\mathrm{DBF}}_t\|_2\colon \ell\leq M_{t|t},t\leq N, \bm{\nu}^{\ell,\mathrm{DBF}}_t\neq \mathbf{0}\}$, where the superscript DBF denotes quantities computed by the DBF, and analogously for ZMF. Both quantities are finite and positive, and we set $\varepsilon^*=\underline{\nu}/\bar r$. 

Fix $\varepsilon\in(0,\varepsilon^*)$. Since the time updates for $\gamma_{t+1|t}^k$ and $\hat{\mathbf{x}}_{t+1|t}^k$ are identical in both filters, and both start from the single center $\bar{\mathbf{x}}_1$, it suffices to show that the measurement updates keep the unnormalized weights equal branchwise and the positive-weight centers equal to the DBF atoms. Zero-weight branches stay at zero weight in both filters since the updates are multiplicative. On a branch with $\gamma^{k,\mathrm{ZMF}}_{t|t-1}>0$, the center equality gives $\bm{\nu}^{\ell,\mathrm{ZMF}}_t=\bm{\nu}^{\ell,\mathrm{DBF}}_t$. If $\bm{\nu}^{\ell,\mathrm{DBF}}_t=\mathbf{0}$, then the indicator $c_t^\ell =1$, which implies $\bar{\gamma}^{\ell,\mathrm{ZMF}}_{t|t}=\bar{\gamma}^{\ell,\mathrm{DBF}}_{t|t}$, and the center update of the ZMF $\hat{\mathbf{x}}^{k,\mathrm{ZMF}}_{t|t-1}+\mathbf{L}^{\ell}_t\mathbf{0}$ leaves the center unchanged, in agreement with \eqref{eqn:hatxfiltering}. If $\bm{\nu}^{\ell,\mathrm{DBF}}_t\neq\mathbf{0}$, then every point of the innovation zonotope $\zon{\mathbf{0}}{\varepsilon[\mathbf{C}\tilde{\mathbf{E}}^{k}_{t|t-1},\mathbf{R}_j]}$ has norm at most $\varepsilon\bar r<\underline{\nu}\leq\|\bm{\nu}^{\ell,\mathrm{DBF}}_t\|_2$, so the indicator is zero, matching the DBF. The unnormalized weights thus coincide on every branch, and as the realized branch has positive weight, the common normalizing constant is positive and the normalized weights coincide.

\emph{Part (ii)}. The state and innovation covariance of the GSF do not depend on the branch since every mixture component shares the covariances $\bm{\Sigma}^w$ and $\bm{\Sigma}^v$. Thus, we write $\tilde{\mathbf{P}}_{t|t}$ and $\tilde{\mathbf{S}}_t$ without branch superscripts. Induction from $\varepsilon^2\bm{\Sigma}_1$ yields $\mathbf{P}_{t|t}=\varepsilon^2\tilde{\mathbf{P}}_{t|t}$ and $\mathbf{S}_{t}=\varepsilon^2\tilde{\mathbf{S}}_{t}$, with $\tilde{\mathbf{P}}_{t|t}, \tilde{\mathbf{S}}_{t}\succ \mathbf{0}$ independent of $\varepsilon$. As a consequence, iterating the weight updates gives, for a branch $\ell$ at time $t$,
\begin{align}
\label{gammaexpanded}
    \gamma^{\ell,\mathrm{GSF}}_{t|t}&=\frac{\pi(\mathbf{h}_{t|t}^\ell) e^{-q_\ell/(2\varepsilon^2)}}{\sum_{\ell'=1}^{M_{t|t}}\pi(\mathbf{h}_{t|t}^{\ell'}) e^{-q_{\ell'}/(2\varepsilon^2)}}, \\
    q_\ell:&=\sum_{\tau=1}^{t}(\bm{\nu}_\tau^{\ell,\mathrm{GSF}})^\top\tilde{\mathbf{S}}_\tau^{-1}\bm{\nu}^{\ell,\mathrm{GSF}}_\tau\geq 0. \notag
\end{align}
We claim that $q_\ell=0$ if and only if $\bm{\nu}^{\ell,\mathrm{DBF}}_\tau=\mathbf{0}$ for all $\tau\leq t$. To this end, we prove by induction that, for every $\tau\leq t$, if $\bm{\nu}^{\ell,\mathrm{GSF}}_r=\bm{\nu}^{\ell,\mathrm{DBF}}_r=\mathbf{0}$ for all $r<\tau$, then the GSF mean and the DBF atom coincide at time $\tau$, and consequently $\bm{\nu}^{\ell,\mathrm{GSF}}_\tau=\bm{\nu}^{\ell,\mathrm{DBF}}_\tau$. Indeed, both filters start from $\bar{\mathbf{x}}_1$ and share the time update \eqref{eqn:hat_x1_filtering_theorem}, and, at every $r<\tau$, the zero innovation makes the GSF measurement update $\hat{\mathbf{x}}^{k,\mathrm{GSF}}_{r|r-1}+\mathbf{K}_r^\ell\mathbf{0}$ coincide with \eqref{eqn:hatxfiltering}. Hence one innovation sequence vanishes at every $\tau\leq t$ if and only if the other does, which proves the claim.

Since $\mathbbm{1}\{q_\ell\hspace{-0.02cm}=\hspace{-0.02cm}0\}\hspace{-0.02cm}=\hspace{-0.02cm}\prod_{\tau\leq t}\delta(\bm{\nu}^{\ell,\mathrm{DBF}}_\tau,\mathbf{0})$ by the claim above,
\begin{equation}
    \gamma^{\ell,\mathrm{GSF}}_{t|t}\xrightarrow{\varepsilon\to 0}\frac{\pi(\mathbf{h}_{t|t}^\ell)\mathbbm{1}\{q_\ell=0\}}{\sum_{\ell'=1}^{M_{t|t}}\pi(\mathbf{h}_{t|t}^{\ell'})\mathbbm{1}\{q_{\ell'}=0\}}=\gamma^{\ell,\mathrm{DBF}}_{t|t}, \notag
\end{equation}
and the prediction weights $\gamma^{k}_{t+1|t}$ also converge, since their updates are the same. 

Finally, the GSF filtering posterior is the finite Gaussian mixture described by $\{(\gamma^{\ell,\mathrm{GSF}}_{t|t},\hat{\mathbf{x}}^{\ell,\mathrm{GSF}}_{t|t},\varepsilon^2\tilde{\mathbf{P}}_{t|t})\}_{\ell=1}^{M_{t|t}}$, where the means are independent of $\varepsilon$. For any bounded continuous $f$, each Gaussian component with fixed mean and vanishing covariance converges weakly to the point mass at its mean. Thus, since the sum is finite,
\begin{align}
    &\int_{\mathbb{R}^n} f(\mathbf{x})\sum_{\ell=1}^{M_{t|t}}\gamma^{\ell,\mathrm{GSF}}_{t|t}\mathcal{N}(\mathbf{x};\hat{\mathbf{x}}^{\ell,\mathrm{GSF}}_{t|t},\varepsilon^2\tilde{\mathbf{P}}_{t|t})\mathrm{d}\mathbf{x}\notag \\
    &\overset{\varepsilon\to 0}{\xrightarrow{\mkern20mu}}\hspace{-0.1cm}\sum_{\ell=1}^{M_{t|t}}\hspace{-0.12cm}\gamma^{\hspace{-0.02cm}\ell,\mathrm{DBF}}_{t|t} \hspace{-0.07cm}f(\hspace{-0.01cm}\hat{\mathbf{x}}^{\hspace{-0.02cm}\ell,\mathrm{GSF}}_{t|t}\hspace{-0.01cm}) \hspace{-0.09cm}=\hspace{-0.15cm} \int_{\mathbb{R}^n} \hspace{-0.25cm}f(\hspace{-0.01cm}\mathbf{x}\hspace{-0.01cm})\hspace{-0.1cm}\sum_{\ell=1}^{M_{t|t}}\hspace{-0.1cm}\gamma^{\hspace{-0.02cm}\ell,\mathrm{DBF}}_{t|t}\hspace{-0.05cm}\delta(\hspace{-0.01cm}\mathbf{x},\hspace{-0.025cm}\hat{\mathbf{x}}^{\hspace{-0.02cm}\ell,\mathrm{DBF}}_{t|t}\hspace{-0.01cm})\mathrm{d}\mathbf{x}, \notag
\end{align}
where the swap between $\hat{\mathbf{x}}^{\ell,\mathrm{GSF}}_{t|t}$ and $\hat{\mathbf{x}}^{\ell,\mathrm{DBF}}_{t|t}$ in the last equality is justified by the fact that these quantities are equal when $\gamma^{\ell,\mathrm{DBF}}_{t|t}>0$. Hence, the posterior of the GSF converges weakly to \eqref{eqn:filtering_pmf}, completing the proof. \hfill$\square$
\end{pf}

\balance
\bibliography{Bibliography}
\end{document}